\newcommand{\myincludegraphics}[2]{\includegraphics{#1.#2}}
\newcommand{\fmult}{\ast}
\newcommand{\Markovchain}{\mathcal{P}}
\newcommand{\diam}[1]{\mathop{\mathrm{diam}}(#1)}
\newcommand{\size}[1]{\left|#1\right|}
\newcommand{\set}[1]{\left\{#1\right\}}
\newcommand{\floor}[1]{\left\lfloor#1\right\rfloor}
\newcommand{\ceil}[1]{\left\lceil#1\right\rceil}
\newcommand{\roundb}[1]{\left(#1\right)}
\newcommand{\bracket}[1]{\langle#1\rangle}
\newcommand{\bfone}{\mathbf{1}}
\newcommand{\bfpi}{\boldsymbol{\pi}}
\newcommand{\bfe}{\boldsymbol{e}}
\newcommand{\bfr}{\boldsymbol{r}}
\newcommand{\bfg}{\boldsymbol{g}}
\newcommand{\bfrho}{\mbox{\boldmath$\rho$}}
\newcommand{\bfgamma}{\mbox{\boldmath$\gamma$}}
\newcommand{\bflambda}{\mbox{\boldmath$\lambda$}}
\newcommand{\bfralpha}{\tilde{\bfrho}}
\newcommand{\bfrbeta}{\tilde{\bfr}}
\newcommand{\ralpha}{\tilde{\rho}}
\newcommand{\rbeta}{\tilde{r}}
\newcommand{\Dalpha}{\tilde{P}}
\newcommand{\dalpha}{\tilde{p}}
\newcommand{\Da}{D}
\newcommand{\da}{d}
\newcommand{\ra}{r}
\newcommand{\Malpha}{M'}
\newcommand{\Surplus}{\mathrm{Surp}}
\newcommand{\bfv}{\boldsymbol{v}}
\newcommand{\bfx}{\boldsymbol{x}}
\newcommand{\bfxi}{\boldsymbol{\xi}}
\newcommand{\sfc}{\mathsf{c}}
\newcommand{\sfe}{\mathsf{e}}
\newcommand{\sfG}{\mathsf{G}}
\newcommand{\sfN}{\mathsf{N}}
\newcommand{\rfp}{\rho}
\newcommand{\bfrfp}{\bfrho}
\newcommand{\Phifp}{P}
\newcommand{\phifp}{p}
\newcommand{\capacity}{\mathsf{cap}}
\newcommand{\mmax}[1]{\mathop{\max\set{#1}}}
\newcommand{\mmin}[1]{\mathop{\min\set{#1}}}
\newcommand{\wt}{{w_\mathrm{d}}} 
\newcommand{\wm}{{w_\mathrm{m}}} 
\newcommand{\GK}{G^{\otimes N}}
\newcommand{\GKK}{G^{\otimes \sfN}}
\newcommand{\sfGKK}{\sfG^{\otimes \sfN}}
\newcommand{\usource}{u_\sigma}
\newcommand{\utarget}{u_\tau}
\newcommand{\amin}{a_{\min}}
\newcommand{\amax}{a_{\max}}
\newcommand{\memory}{\mathsf{m}}
\newcommand{\anticipation}{\mathsf{a}}
\newcommand{\parent}{\mathrm{parent}}
\newcommand{\Equation}[1]{\hbox{(\ref{#1})}}
\newcommand{\Footnotetext}[2]{\begin{figure}[b]\footnotesize%
  \vspace{-3ex}\hrulefill\hfill\makebox[0em]{}\hfill\makebox[0em]{}%
  \vfill
  \par${}^{#1}$ #2\vspace{-0.60ex}\end{figure}\addtocounter{figure}{0}}
\newtheorem{theo}{Theorem}
\newtheorem{lemm}[theo]{Lemma}
\newcommand{\fbinom}[2]{\genfrac{\langle}{\rangle}{0pt}{}{#1}{#2}}
\newcommand{\ifbinom}[2]{\genfrac{\lceil}{\rceil}{0pt}{}{#1}{#2}}
\newcommand{\ceiling}[1]{\left\lceil #1 \right\rceil}
\newcommand{\float}[1]{\overline{#1}}
\newcommand{\calS}{\mathbb{S}}
\newenvironment{algorithm}{%
\begin{minipage}{\columnwidth}\vspace{1ex}\small
\makebox[0ex]{}\hrulefill\makebox[0ex]{}\\*}{%
\makebox[0ex]{}\hrulefill\makebox[0ex]{}\end{minipage}}
\newenvironment{example}[1][Example:]{%
\noindent\textbf{#1}
}{%
\hspace{10pt}\hfill$\blacksquare$\par
}
\begin{document}

\title{On Row-by-Row Coding for 2-D Constraints
}

\author{\authorblockN{Ido Tal \qquad Tuvi Etzion \qquad Ron M. Roth}
\authorblockA{
Computer Science Department,\\
Technion, Haifa 32000, Israel.\\
Email: \{{\tt idotal, etzion, ronny}\}{\tt @cs.technion.ac.il}
}}
\maketitle

\begin{abstract}
A constant-rate encoder--decoder pair is presented
for a fairly large family of two-dimensional (2-D) constraints.
Encoding and decoding is done in a row-by-row manner,
and is sliding-block decodable.

Essentially, the 2-D constraint is turned into a set of independent
and relatively simple one-dimensional (1-D) constraints;
this is done by dividing the array into fixed-width vertical strips.
Each row in the strip is seen as a symbol, and a graph presentation
of the respective 1-D constraint is constructed. The maxentropic
stationary Markov chain on this graph is next considered: a perturbed
version of the corresponding probability distribution on the edges of
the graph is used in order to build an encoder which operates
\emph{in parallel} on the strips. This perturbation is found by means
of a network flow, with upper and lower bounds on the flow through
the edges.

A key part of the encoder is an enumerative coder for constant-weight
binary words. A fast realization of this coder is shown, using
floating-point arithmetic.
\end{abstract}
\Footnotetext{}
{
The work of Tuvi Etzion was supported in part by the United States --
Israel Binational Science Foundation (BSF), Jerusalem, Israel,
under Grant No. 2006097.

The work of Ron M. Roth was supported in part by the United States --
Israel Binational Science Foundation (BSF), Jerusalem, Israel,
under Grant No. 2002197.}

\section{Introduction}
\label{sec:introduction}
Let $G = (V,E,L)$ be an edge-labeled directed graph
(referred to hereafter simply as a graph),
where $V$ is the vertex set, $E$ is the edge set,
and $L:E \to \Sigma$ is the edge labeling taking values on
a finite alphabet $\Sigma$~\cite[\S 2.1]{MarcusRothSiegel:98}.
We require that the labeling $L$ is deterministic: edges that start
at the same vertex have distinct labels. We further assume
that $G$ has finite memory~\cite[\S 2.2.3]{MarcusRothSiegel:98}.
The one-dimensional (1-D) \emph{constraint} $S=S(G)$ that is presented
by $G$ is defined as the set of all words that are generated by paths
in $G$ (i.e., the words are obtained by reading-off the edge labels of
such paths).
Examples of 1-D constraints include
runlength-limited (RLL)
constraints~\cite[\S 1.1.1]{MarcusRothSiegel:98},
symmetric runlength-limited (SRLL) constraints~\cite{Etzion:97},
and the charge constraints~\cite[\S 1.1.2]{MarcusRothSiegel:98}.
The capacity of $S$ is given by
\[
\capacity(S) =
\lim_{\ell \to \infty}
(1/\ell) \cdot \log_2 \size{S \cap \Sigma^\ell} \; .
\]

An $M$-track \emph{parallel encoder} for $S=S(G)$
at rate $R$ is defined as follows (see Figure~\ref{fig:encoder}).

\begin{figure}[ht]
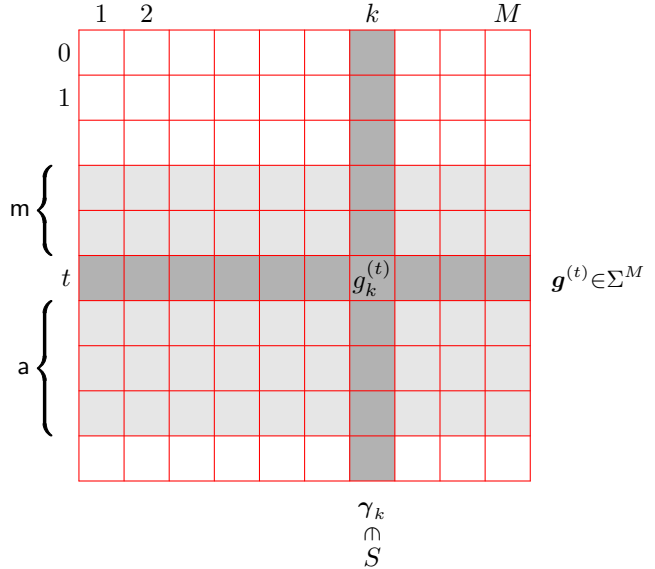

\centering
\myincludegraphics{flow}{1}
\caption{Array corresponding to an $M$-track parallel encoder.}
\label{fig:encoder}
\end{figure}

\begin{enumerate}
\item
At stage $t = 0, 1, 2, \cdots$, the encoder
(which may be state-dependent) receives as
input $M \cdot R$ (unconstrained) information bits.
\item
The output of the encoder at stage $t$ is
a word $\bfg^{(t)} = (g_k^{(t)})_{k=1}^M$
of length $M$ over $\Sigma$.
\item
For $1 \leq k \leq M$, the \emph{$k$th track}
$\bfgamma_k = (g_k^{(t)})_{t=0}^{\ell-1}$
of any given length $\ell$, belongs to $S$.
\item
There are integers $\memory,\anticipation \geq 0$
such that the encoder is
\emph{$(\memory,\anticipation)$-sliding-block decodable}
(in short, $(\memory,\anticipation)$-SBD):
for $t \geq \memory$, the $M \cdot R$ information bits which were
input at stage $t$ are uniquely determined by
(and can be efficiently calculated from)
$\bfg^{(t-\memory)}, \bfg^{(t-\memory+1)}, \ldots,
\bfg^{(t+\anticipation)}$.
\end{enumerate}
The decoding window size of the encoder is
$\memory + \anticipation + 1$, and it is desirable to have a small
window to avoid error propagation. In this work, we will be mainly
focusing on the case where $\anticipation = 0$, in which case
the decoding requires no look-ahead.

In~\cite{HalevyRoth:02}, it was shown that by introducing parallelism,
one can reduce the window size, compared to conventional serial
encoding. Furthermore, it was shown that as $M$ tends to infinity,
there are $(0,0)$-SBD parallel encoders whose rates
approach $\capacity(S(G))$.
A key step in~\cite{HalevyRoth:02} is using some perturbation of
the conditional probability distribution on the edges of $G$,
corresponding to the maxentropic stationary Markov chain on $G$.
However, it is not clear how this perturbation should be done:
a naive method will only work for unrealistically large $M$.
Also, the proof in~\cite{HalevyRoth:02} of the $(0,0)$-SBD property
is only probabilistic and does not suggest encoders and decoders
that have an acceptable running time.

In this work, we aim at making the results
of~\cite{HalevyRoth:02} more tractable. At the expense of
possibly increasing the memory of the encoder (up to the memory of $G$)
we are able to define a suitable perturbed distribution explicitly,
and provide an efficient algorithm for computing it. Furthermore,
the encoding and decoding can be carried out in time complexity
$O(M \log^2 M \log \log M)$,
where the multiplying constants in the $O(\cdot)$
term are polynomially large in the parameters of $G$.

Denote by $\diam{G}$ the diameter of $G$ (i.e., the longest shortest
path between any two vertices in $G$) and let $A_G = (a_{i,j})$
be the adjacency matrix of $G$,
i.e., $a_{i,j}$ is the number of edges in $G$ that start at vertex $i$
and terminate in vertex $j$.
Our main result, specifying the rate of our encoder, is given in
the next theorem.

\begin{theo}
\label{theo:main}
Let $G$ be a deterministic graph with memory $\memory$.
For $M$ sufficiently large, one can efficiently construct
an $M$-track $(\memory,0)$-SBD parallel encoder for $S = S(G)$
at a rate $R$ such that
\begin{multline}
\label{eq:Rbound}
R \geq \capacity(S(G)) \Bigl( {1 -\frac{\size{V} \diam{G}}{2M}} \Bigr)
\\
- O\roundb{\frac{\size{V}^2
\log \, (M \cdot \amax/\amin)}{M - \size{V} \diam{G}/2}} \; ,
\end{multline}
where $\amin$ (respectively, $\amax$) is the smallest
(respectively, largest) \emph{nonzero} entry in $A_G$.
\end{theo}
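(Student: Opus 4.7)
The strategy I would pursue is to turn parallel encoding into an enumeration problem, carried out column-by-column, and to control that enumeration by an integer circulation on $G$ that closely approximates the maxentropic edge flow. Let $\bfpi = (\pi_v)_v$ be the stationary distribution of the maxentropic Markov chain on $G$ and let $(p_e)_e$ be its conditional edge probabilities, so that the edge flow $f_e = \pi_{\mathrm{src}(e)} p_e$ is a probability circulation whose entropy rate equals $\capacity(S(G))$. I would then seek integer edge weights $(n_e)_e$ with $\sum_e n_e = M$ that (a) form a circulation on $G$, (b) deviate from $M f_e$ by at most $\size{V} \diam{G} / 2$ per edge, and (c) satisfy $n_e \geq 1$ whenever $f_e > 0$.

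The construction of $(n_e)$ would proceed by first rounding $M f_e$ to an integer vector (which destroys flow conservation by at most one unit per vertex) and then repairing the resulting excesses and deficits via a feasible integer flow in an auxiliary network equipped with per-edge lower and upper capacities. Since every repair unit travels along a path of length at most $\diam{G}$ and the total surplus to be routed is at most $\size{V}$, each $n_e$ changes from $\lfloor M f_e \rfloor$ by at most $\size{V} \diam{G} / 2$. This is the ``flow with bounds'' problem referred to in the abstract, and I would argue that it admits an efficient solution whose running time is polynomial in the parameters of $G$.

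Once $(n_e)$ is fixed, the $M$-track encoder operates at each stage as follows. The multiset of current track-states contains $N_v = \sum_{e \text{ out of } v} n_e$ copies of each vertex $v$; the $N_v$ tracks presently at $v$ must be dispatched along $v$'s outgoing edges with exact multiplicities $(n_e)$. The number of such dispatches is the product of multinomials $\prod_v \binom{N_v}{(n_e)_{e \text{ out of } v}}$, and into the binary logarithm of this quantity the encoder packs information bits by composing one constant-weight enumerative coder per vertex; such a coder with the claimed $O(M \log^2 M \log \log M)$ complexity is the subject of a later section. Kirchhoff's law for $(n_e)$ guarantees that the next multiset of track-states again equals $(N_v)_v$, so the same rule can be reused at every time step.

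The rate bound \Equation{eq:Rbound} then follows from a Stirling expansion of each multinomial summed over $v$: after replacing $n_e / M$ by $f_e$ plus a deviation controlled by (b) and expanding the per-vertex entropy to first order, one obtains the multiplicative factor $1 - \size{V} \diam{G}/(2M)$ applied to $\capacity(S(G))$, together with an additive error in which the $\log(M \amax/\amin)$ term arises from the Lipschitz constant of entropy restricted to the feasible range, condition (c) keeping the argument bounded away from zero. The $(\memory,0)$-SBD property is inherited from $G$: since $G$ is deterministic and has memory $\memory$, the edge used in a track at stage $t$ is determined by the symbols $g_k^{(t-\memory)},\ldots,g_k^{(t)}$ in that track alone, so the information bits from stage $t$ are recovered by inverting the enumerative coder on column $\bfg^{(t)}$ using only $\bfg^{(t-\memory)},\ldots,\bfg^{(t)}$ and no look-ahead. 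The main obstacle I anticipate is simultaneously enforcing (a), (b), and (c): edges carrying very small flow absorb only a bounded perturbation before becoming infeasible, which is exactly why the bound on $R$ degrades with $\amin$ and why the construction requires $M$ large enough that $M f_e$ dominates $\size{V} \diam{G}$ for every edge with $f_e > 0$.
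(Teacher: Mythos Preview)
Your overall architecture matches the paper's: take the maxentropic stationary edge flow, replace it by an integer circulation on $G$, encode each column by dispatching the $N_v$ tracks currently at vertex $v$ along its outgoing edges with the prescribed multiplicities via enumerative/multinomial coding, and recover the rate from Stirling. The $(\memory,0)$-SBD argument and the identification of where $\amin$ enters the error term are also on target.

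The gap is in how you obtain the integer circulation. You write that ``rounding $Mf_e$ to an integer vector\ldots destroys flow conservation by at most one unit per vertex,'' and then repair via short paths. Entrywise rounding does not give a one-unit-per-vertex guarantee: the imbalance at $v$ after rounding is $\sum_{e\ \mathrm{into}\ v}\{Mf_e\}-\sum_{e\ \mathrm{out\ of}\ v}\{Mf_e\}$, which can be as large as the degree of $v$, so the total surplus to route may be $\Theta(|E|)$ rather than $|V|/2$, and the repair then consumes $\Theta(|E|)\cdot\diam{G}$ tracks instead of $\lfloor|V|\diam{G}/2\rfloor$---which spoils the leading factor $1-|V|\diam{G}/(2M)$ in \Equation{eq:Rbound}. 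This is exactly the place where the paper deploys the ``flow with upper and lower bounds,'' and it uses it \emph{before} the repair, not after: the real matrix $P=(p_{i,j})$ with $p_{i,j}=M'\pi_i q_{i,j}$ and $M'=M-\lfloor|V|\diam{G}/2\rfloor$ is quantized by solving an integer flow on an auxiliary network $\usource\to u_\omega\to u'_i\to u''_j\to\utarget$ whose middle edges carry bounds $[\lfloor p_{i,j}\rfloor,\lceil p_{i,j}\rceil]$ and whose outer edges carry the floors/ceilings of the row and column sums. The real flow $p_{i,j}$ witnesses feasibility, so an integer solution $\tilde p_{i,j}$ exists; it pins every entry \emph{and} every row sum \emph{and} every column sum to within one of the ideal simultaneously, which is what forces the per-vertex imbalance into $\{-1,0,1\}$. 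Only then does the shortest-path repair add at most $\lfloor|V|/2\rfloor\cdot\diam{G}$ edges in total, fitting inside the $M-M'$ reserved tracks. A one-shot alternative that also closes your gap: solve an integer \emph{circulation} directly on $G$ with edge bounds $[\lfloor M'f_e\rfloor,\lceil M'f_e\rceil]$; Hoffman's theorem plus flow integrality give a feasible integer circulation with each $n_e$ within one of $M'f_e$, and no repair is needed at all.
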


The structure of this paper is as follows. In Section~\ref{sec:twoDimensionalConstraints} we show how parallel encoding can be used to construct an encoder for a 2-D constraint. As we will show, a parallel encoder is essentially defined through what we term a multiplicity matrix. Section~\ref{sec:encoder} defines how our parallel encoder works, assuming its multiplicity matrix is given. Then, in Section~\ref{sec:computingmultiplicitymatrix}, we show how to efficiently calculate a good multiplicity matrix. Although 2-D constraints are our main motivation, Section \ref{sec:1D} shows how our method can be applied to 1-D constraints. Section~\ref{sec:twoOpt} defines two methods by which the rate of our encoder can be slightly improved. Finally, in  Section~\ref{sec:fastenumerativecoding} we show a method of efficiently realizing a key part of our encoding procedure.
\section{Two-dimensional constraints}
\label{sec:twoDimensionalConstraints}

Our primary motivation for studying parallel encoding is to show
an encoding algorithm for a family of two-dimensional (2-D) constraints.

The concept of a 1-D constraint can formally be generalized
to two dimensions (see~\cite[\S 1]{HalevyRoth:02}).
Examples of 2-D constraints are
2-D RLL constraints~\cite{KatoZeger:99},
2-D SRLL constraints~\cite{Etzion:97}, and
the so-called square constraint~\cite{WeeksBlahut:98}.
Let $\calS$ be a given 2-D constraint over a finite alphabet $\Sigma$.
We denote by $\calS[\ell,w]$ the set of all $\ell \times w$ arrays
in $\calS$.
The capacity of $\calS$~\cite{BurtonSteif:94}
is given by
\[
\capacity(\calS) =
\lim_{\ell, w \to \infty} \frac{1}{\ell \cdot w} \cdot
\log_2 \size{\calS[\ell, w]} \; .
\]

Suppose we wish to encode information to an $\ell \times w$ array which
must satisfy the constraint $\calS$; namely, the array must be
an element of $\calS[\ell,w]$. As a concrete example, consider
the square constraint~\cite{WeeksBlahut:98}:
its elements are all the binary arrays in which no two `1' symbols are adjacent on a row, column, or diagonal.

We first partition our array into two alternating types of vertical
strips: \emph{data strips} having width $\wt$, and
\emph{merging strips} having width $\wm$.
In our example, let $\wt=4$ and $\wm=1$
(see Figure~\ref{fig:example:square}).

\begin{figure}[htbf]
\small
\[
\begin{array}{cccc|c|cccc|c|cccc}
0 & 0 & 1 & 0 & 0 & 1 & 0 & 1 & 0 & 0 & 0 & 0 & 0 & 1 \\
1 & 0 & 0 & 0 & 0 & 0 & 0 & 0 & 0 & 0 & 1 & 0 & 0 & 0 \\
0 & 0 & 0 & 1 & 0 & 0 & 1 & 0 & 0 & 0 & 0 & 0 & 0 & 0 \\
1 & 0 & 0 & 0 & 0 & 0 & 0 & 0 & 1 & 0 & 1 & 0 & 0 & 1 \\
\end{array}
\]
\caption{Binary array satisfying the square constraint,
partitioned into data strips of width $\wt=4$
and merging strips of width $\wm=1$.}
\label{fig:example:square}
\end{figure}

Secondly, we select a graph $G = (V,E,L)$ with
a labeling $L:E \to \calS[1, \wt]$ such that
$S(G) \subseteq \calS$, i.e.,
each path of length $\ell$ in $G$ generates
a (column) word which is in $\calS[\ell,\wt]$.
We then fill up the data strips of our $\ell \times w$ array
with $\ell \times \wt$ arrays corresponding to paths
of length $\ell$ in $G$.
Thirdly, we assume that the choice of $\wm$ allows us
to fill up the merging strips in a row-by-row (causal)
manner, such that our $\ell \times w$ array is in $\calS$.
Any 2-D constraint $\calS$ for which such $\wt$, $\wm$,
and $G$ can be found, is in the family of constraints we can code for
(for example, the 2-D SRLL constraints
belong to this family~\cite{Etzion:97}).

Consider again the square constraint: a graph which produces
\emph{all} $\ell \times \wt$ arrays that satisfy this constraint
is given in Figure~\ref{fig:example:G}.
Also, for $\wm = 1$, we can take the merging strips to be all-zero.
(There are cases, such as the 2-D SRLL constraints,
where determining the merging strips may be
less trivial~\cite{Etzion:97}.)
\begin{figure}[ht]
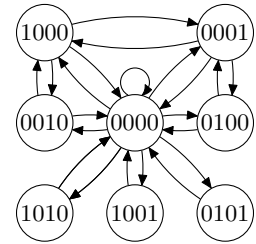

\centering
\myincludegraphics{flow}{2}
\caption{Graph $G$ whose paths generate all $\ell \times 4$
arrays satisfying the square constraint.
The label of an edge is given by the label of the vertex
it enters.}
\label{fig:example:G}
\end{figure}

Suppose we have
an $(\memory,0)$-SBD parallel encoder for $S = S(G)$ at rate $R$
with $M = (w + \wm)/(\wt+\wm)$ tracks.
We may use this parallel encoder to encode information in
a row-by-row fashion to our $\ell \times w$ array: at stage $t$
we feed $M \cdot R$ information bits to our parallel encoder.
Let $\bfg^{(t)} = (g_k^{(t)})_{k=1}^M$ be the output of
the parallel encoder at stage $t$. We write $g_k^{(t)}$
to row $t$ of the $k$th data strip, and then appropriately fill up
row $t$ of the merging strips.
Decoding of a row in our array can be carried out based only on
the contents of that row and the previous $\memory$ rows.

Since $M \cdot R$ information bits are mapped to $M \cdot \wt + (M-1)\cdot \wm$ symbols in $\Sigma$, the rate at which we encode information to the array is
\[
\frac{R}{\wt + \wm(1 - 1/M)} \leq
\frac{\capacity(S(G))}{\wt + \wm(1 - 1/M)} \; .
\]
We note the following tradeoff: Typically, taking larger values of $\wt$ (while keeping $\wm$ constant) will increase the right-hand side of the above inequality.
However, the number of vertices and edges in $G$ will usually grow exponentially
with $\wt$. Thus, $\wt$ is taken to be reasonably small. 

Note that in our scheme, a single error generally results in the loss of information stored in the respective vertical sliding-block window. Namely, a single corrupted entry in the array may cause the loss of $\memory+1$ rows. Thus, our method is only practical if we assume an error model in which whole rows are corrupted by errors. This is indeed the case if each row is protected by an error-correcting code (for example, by the use of unconstrained positions \cite{Campello+:02}).

\section{Description of the encoder}
\label{sec:encoder}
Let $N$ be a positive integer which will shortly be specified.  The $N$ words $\bfgamma_k = (g_k^{(t)})_{t=0}^{\ell-1}$,
$1 \leq k \leq N$, that we will be writing to the first $N$ tracks are
all generated by
paths of length $\ell$ in $G$.
In what follows, we find it convenient to regard
the $\ell \times N$ arrays
$(\bfgamma_k)_{k=1}^N = (g_k^{(t)})_{t=1}^\ell{}_{k=1}^N$
as (column) words of length $\ell$ of some new 1-D constraint,
which we define next.

The \emph{$N$th Kronecker power} of $G = (V,E,L)$, denoted by $\GK = (V^N,E^N,L^N)$,  is defined as follows.
The vertex set $V^N$ is simply the $N$th Cartesian power of $V$;
that is,
\[
V^N = \set{ \bracket{v_1, v_2, \ldots, v_N} : v_k \in V } \; .
\]
An edge $\bfe = \bracket{e_1, e_2, \ldots, e_N} \in E^N$ goes from vertex
$\bfv =  \bracket{v_1, v_2, \ldots, v_N} \in V^N$ to vertex
$\bfv' = \bracket{v_1', v_2', \ldots, v_N'} \in V^N$ and is labeled
$L^N(\bfe) = \bracket{b_1, b_2, \ldots, b_N}$ whenever for
all $1 \leq k \leq N$, $e_k$ is
an edge from $v_k$ to $v_k'$ labeled $b_k$.

Note that a path of length $\ell$ in $\GK$ is just
a handy way to denote $N$ paths of length $\ell$ in $G$.
Accordingly, the $\ell \times N$ arrays
$(\bfgamma_k)_{k=1}^N$ are the words of length $\ell$ in $S(\GK)$.

Let $G$ be as in Section~\ref{sec:introduction} and
let $A_G = (a_{i,j})$ be the adjacency matrix of $G$.
Denote by $\bfone$ the $1 \times \size{V}$ all-one row vector.
The description of
our $M$-track  parallel encoder for $S=S(G)$ makes use
of the following definition.
A $\size{V} \times \size{V}$ nonnegative integer matrix
$\Da=(\da_{i,j})_{i,j \in V}$ is called a (valid)
\emph{multiplicity matrix} with respect to $G$ and $M$ if
\begin{align}
\label{eq:sum_entries}
& \bfone \cdot \Da \cdot \bfone^T \leq M \; ,\\
\label{eq:sum_up}
& \bfone \cdot \Da = \bfone \cdot \Da^T \; , \quad \textrm{and} \\
\label{eq:sane_entries}
& \textrm{$\da_{i,j} > 0$ only if $a_{i,j} > 0$} \; .
\end{align}
(While any multiplicity matrix will produce a parallel encoder,
some will have higher rates than others.
In Section~\ref{sec:computingmultiplicitymatrix},
we show how to compute multiplicity matrices $\Da$ that yield rates
close to $\capacity(S(G))$.)

Recall that we have at our disposal $M$ tracks. However,
we will effectively be using only the first
$N = \bfone \cdot \Da \cdot \bfone^T$ tracks in order
to encode information.
The last $M-N$ tracks  will all be equal to the first track, say.

Write $\bfr = (r_i)_{i \in V} = \bfone \cdot \Da^T$.
A vertex $\bfv = \bracket{v_k}_{k=1}^N \in V^N$ is
a \emph{typical vertex} (with respect to $\Da$) if for
all $i$, the vertex $i \in V$ appears as an entry in $\bfv$ exactly
$r_i$ times.
Also, an edge $\bfe = \bracket{e_k}_{k=1}^N \in E^N$ is
a \emph{typical edge} with respect to $\Da$ if for all $i, j \in V$,
there are exactly $\da_{i,j}$ entries $e_k$
which---as edges in $G$---start at vertex $i$
and terminate in vertex $j$.

A simple computation shows that the number of
outgoing typical edges from a typical vertex equals
\begin{equation}
\label{eq:R_D}
\Delta =
\frac{\prod_{i \in V} r_i!}{\prod_{i,j \in V} \da_{i,j}!
\cdot a_{i,j}^{-\da_{i,j}}}
\end{equation}
(where $0^0 \triangleq 1$).
For example, in the simpler case where $G$ does not contain parallel
edges ($a_{i,j} \in \set{0,1}$),
we are in effect counting in~\Equation{eq:R_D} permutations with
repetitions, each time for a different vertex $i \in V$.

The encoding process will be carried out as follows. We start at some
fixed typical vertex $\bfv^{(0)} \in V^N$.
Out of the set of outgoing edges from $\bfv^{(0)}$,
we consider only typical edges.
The edge we choose to traverse is determined by
the information bits. After traversing the chosen edge,
we arrive at vertex $\bfv^{(1)}$.
By~\Equation{eq:sum_up}, $\bfv^{(1)}$ is also a typical vertex,
and the process starts over.
This process defines
an $M$-track parallel encoder for $S=S(G)$ at rate
\begin{equation*}
R = R(\Da) = \frac{\lfloor \log_2 \Delta \rfloor}{M} \; .
\end{equation*}
This encoder is $(\memory,0)$-SBD, where $\memory$ is the memory of $G$.

Consider now how we map $M \cdot R$ information bits into
an edge choice $\bfe \in E^N$ at any given stage $t$.
Assuming again the simpler case of a graph with no parallel edges,
a natural choice would be to use an instance of
enumerative coding~\cite{Cover:73}.
Specifically, suppose that for $0 \leq \delta \leq n$,
a procedure for encoding information by an $n$-bit binary vector
with Hamming weight $\delta$ were given.
Suppose also that $V = \set{1,2,\ldots,\size{V}}$.
We could use this procedure as follows.
First, for $n = r_1$ and $\delta = \da_{1,1}$,
the binary word given as output by the procedure will define
which $\da_{1,1}$ of the possible $r_1$ entries in $\bfe$
will be equal to the edge in $E$ from
the vertex $1 \in V$ to itself
(if no such edge exists, then $\da_{1,1}=0$).
Having chosen these entries,
we run the procedure with $n = r_1 - \da_{1,1}$ and
$\delta = \da_{1,2}$
to choose from the remaining $r_1 - \da_{1,1}$ entries those
that will contain the edge in $E$ from $1 \in V$ to $2 \in V$.
We continue this process, until all $r_1$ entries in $\bfe$ containing
an edge outgoing from $1 \in V$ have been picked.
Next, we run the procedure with $n = r_2$ and
$\delta = \da_{2,1}$, and so forth.
The more general case of a graph containing parallel edges will include
a preliminary step:
encoding information in the choice of
the $\da_{i,j}$ edges used to traverse from $i$ to $j$
($a_{i,j}$ options for each such edge).

A fast implementation of enumerative coding is presented in
Section~\ref{sec:fastenumerativecoding}.
The above-mentioned preliminary step makes use of
the Sch\"onhage--Strassen integer-multiplication
algorithm~\cite[\S 7.5]{Aho+:74},
and the resulting encoding time complexity is proportional\footnote{Actually, the time complexity for the preliminary step can be made linear in $M$, with a negligible penalty in terms of rate: Fix $i$ and $j$, and let $\eta$ be an integer design parameter. Assume for simplicity that $\eta | d_{i,j}$. The number of vectors of length $\eta$ over an alphabet of size $a_{i,j}$ is obviously $a_{i,j}^\eta$. So, we can encode $\floor{\eta \log_2 a_{i,j}}$ bits through the choice of such a vector. Repeating this process, we can encode $(d_{i,j}/\eta) \cdot \floor{\eta \log_2 a_{i,j}}$ bits through the choice of $d_{i,j}/\eta$ such vectors. The concatenation of these vectors is taken to represent our choice of edges. Note that the encoding process is linear in $M$ for constant $\eta$. Also, our losses (due to the floor function) become negligible for modestly large $\eta$.
}
to $M \log^2 M \log\log M$. It turns out that this is also
the decoding time complexity. Further details are given in Section~\ref{sec:fastenumerativecoding}.

The next section shows how to find a good multiplicity matrix,
i.e., a matrix $\Da$ such that $R(\Da)$ is close to $\capacity(S(G))$.

\section{Computing a good multiplicity matrix}
\label{sec:computingmultiplicitymatrix}
In order to enhance the exposition of this section, we accompany it by a running
example (see Figure~\ref{fig:running1}).

\begin{figure}[ht]
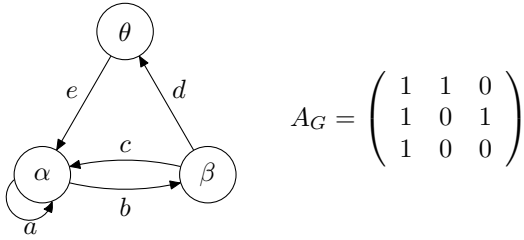

\centering
\parbox{3.5cm}{\myincludegraphics{flow}{3}}
\parbox{3.5cm}{
\[
A_G = \left(
\begin{array}{ccc}
1 & 1 & 0 \\
1 & 0 & 1 \\
1 & 0 & 0
\end{array}
\right) 
\]
}
\caption{\textbf{Running Example (1):} Graph $G$ and the corresponding adjacency matrix
$A_G$.}%
\label{fig:running1}%
\end{figure}%

Throughout this section, we assume a probability distribution on
the edges of $G$, which is
the maxentropic stationary Markov chain
$\Markovchain$ on $G$~\cite{MarcusRothSiegel:98}.
Without real loss of generality,
we can assume that $G$ is irreducible (i.e., strongly-connected),
in which case $\Markovchain$ is indeed unique.
Let the matrix $Q=(q_{i,j})$ be the transition matrix induced by
$\Markovchain$, i.e., $q_{i,j}$ is the probability of traversing
an edge from $i \in V$ to $j \in V$,
conditioned on currently being at vertex $i \in V$.

Let $\bfpi=(\pi_i)$ be the $1 \times \size{V}$ row vector
corresponding to the stationary distribution on $V$ induced by $Q$;
namely, $\bfpi Q = \bfpi$ and $\sum_{i \in V} \pi_i = 1$. Let
\begin{equation}
\label{eq:Malpha}
\Malpha = M - \floor{ \size{V} \diam{G}/2 } \; ,
\end{equation}
and define
\[
\bfrfp = (\rfp_i) \; , \;
\rfp_i = \Malpha \pi_i \; , \quad \textrm{and} \quad
\Phifp = (\phifp_{i,j}) \; ,  \; \phifp_{i,j} = \rfp_i q_{i,j}
\]

\begin{example}[Running Example (2):] Taking the number of tracks in our running example (Figure~\ref{fig:running1}) to be $M=12$  gives $M'=9$. Also, our running example has
\[
\bfpi = \left(
\begin{array}{ccc}
0.619  & 0.282 & 0.099
\end{array}
\right) \; ,
\]
and
\[
Q = \left(
\begin{array}{ccc}
0.544 & 0.456 & 0 \\
0.647 & 0 & 0.353 \\
1 & 0 & 0
\end{array}
\right) \; .
\]
Thus,
\[
\bfrfp = \left(
\begin{array}{ccc}
5.57 &  2.54 & 0.89
\end{array}
\right)
\]
and
\[
\Phifp = \left(
\begin{array}{ccc}
3.03 & 2.54 & 0 \\
1.65 & 0 & 0.89 \\
0.89 & 0 & 0
\end{array}
\right) \; .
\]
\end{example}
Note that 
\[
\bfrfp = \bfone \cdot \Phifp^T \quad \mbox{and} \quad \Malpha = \bfone \cdot
\Phifp \cdot \bfone^T \; .
\]
Also, observe that~\Equation{eq:sum_entries}--\Equation{eq:sane_entries}
hold when we substitute $\Phifp$ for $\Da$. Thus, if all entries of $\Phifp$
were integers, then we could take $\Da$ equal to $\Phifp$.
In a way, that would be the best choice we could have
made: by using Stirling's approximation, we could deduce that
$R(\Da)$ approaches $\capacity(S(G))$ as $M \rightarrow \infty$.
However, the entries of $\Phifp$, as well as $\bfrfp$, may be
non-integers.

%

We say that an \emph{integer} matrix $\Dalpha=(\dalpha_{i,j})$ is
a \emph{good quantization} of $\Phifp=(\phifp_{i,j})$ if
\newcommand{\Align}[1]{\makebox[3.4em]{$#1$}}
\begin{eqnarray}
\label{eq:preflow_sumij}
\textstyle
\Malpha =\sum_{i,j \in V} \phifp_{i,j} & = & \textstyle \sum_{i,j \in V}
\dalpha_{i,j} \; ,\\
\label{eq:preflow_sumj}
\textstyle
\floor{\sum_{j \in V} \phifp_{i,j}}
& \leq & \textstyle \Align{\sum_{j \in V} \dalpha_{i,j}}
\;\;\leq\;\; \ceil{\sum_{j \in V} \phifp_{i,j}} \; ,\\
\label{eq:preflow_entries}
\floor{\phifp_{i,j}} & \leq &
\textstyle \Align{\dalpha_{i,j}}
\;\;\leq\;\; \ceil{\phifp_{i,j}} \; , \quad \textrm{and---}\qquad \\
\label{eq:preflow_sumi}
\textstyle
\floor{\sum_{i \in V} \phifp_{i,j}}
& \leq & \textstyle \Align{\sum_{i \in V} \dalpha_{i,j}}
\;\;\leq\;\; \ceil{\sum_{i \in V} \phifp_{i,j}} \; .
\end{eqnarray}
Namely, a given entry in $\Dalpha$ is either the floor or the ceiling of
the corresponding entry in $\Phifp$, and this also holds for the sum of entries of a
given row or column in $\Dalpha$; moreover, the sum of entries in both
$\Dalpha$ and $\Phifp$ are exactly equal (to $\Malpha$).

\begin{lemm}
\label{lemm:preflow}
There exists a matrix $\Dalpha$ which is a good quantization of
$\Phifp$.
Furthermore, such a matrix can be found by an efficient algorithm.
\end{lemm}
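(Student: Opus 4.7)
The plan is to phrase the problem as that of finding an integer feasible flow in a transportation network with integer lower and upper bounds on every arc, and then invoke the well-known integrality of network flows. I would construct a directed graph with a source $s$, a sink $t$, a node $u_i$ for each ``row index'' $i \in V$, and a node $v_j$ for each ``column index'' $j \in V$. The arcs and bounds would be: an arc $s \to u_i$ with bounds $\bigl[\floor{\sum_{j} \phifp_{i,j}},\, \ceil{\sum_{j} \phifp_{i,j}}\bigr]$ for each $i$; an arc $u_i \to v_j$ with bounds $\bigl[\floor{\phifp_{i,j}},\, \ceil{\phifp_{i,j}}\bigr]$ for each pair $(i,j)$; and an arc $v_j \to t$ with bounds $\bigl[\floor{\sum_{i} \phifp_{i,j}},\, \ceil{\sum_{i} \phifp_{i,j}}\bigr]$ for each $j$. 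I would then ask for a feasible $s$-to-$t$ flow of value exactly $\Malpha$.

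Next, I would exhibit a feasible \emph{fractional} flow to guarantee that this problem admits a solution at all. Setting $f(s, u_i) = \sum_{j} \phifp_{i,j}$, $f(u_i, v_j) = \phifp_{i,j}$, and $f(v_j, t) = \sum_{i} \phifp_{i,j}$ trivially satisfies every arc bound, since each bound is defined as the floor or ceiling of the corresponding quantity. Conservation at the internal nodes holds by construction, and the total value of this flow equals $\bfone \cdot \Phifp \cdot \bfone^T = \Malpha$, which is a nonnegative integer by~\Equation{eq:Malpha}. Hence a feasible fractional flow of the required integer value exists.

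It remains to invoke the integrality theorem for network flows: in a network with integer capacities and integer lower bounds, the existence of \emph{any} feasible flow of integer value implies the existence of an \emph{integer} feasible flow of that same value, and such a flow can be computed in polynomial time by standard network-flow techniques (e.g., reducing feasibility with lower bounds to two ordinary max-flow computations). Defining $\dalpha_{i,j}$ to be the integer flow on the arc $u_i \to v_j$, flow conservation at the intermediate nodes together with the three families of arc bounds and the total flow value translate directly into the four conditions~(\ref{eq:preflow_sumij})--(\ref{eq:preflow_sumi}). Thus $\Dalpha=(\dalpha_{i,j})$ is a good quantization of $\Phifp$, found efficiently. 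No step poses a real obstacle; the main item requiring care is to verify that the four ``good quantization'' conditions correspond precisely to the three arc families plus the total-flow identity, so that a single integer flow yields all of them at once.
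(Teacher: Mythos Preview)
Your proposal is correct and essentially identical to the paper's own proof: the paper likewise casts the problem as an integer feasible flow in a bipartite transportation network with the same three families of arc bounds, exhibits the fractional flow $\phifp_{i,j}$ to establish feasibility, and then invokes the integrality theorem. The only cosmetic difference is that the paper enforces the total-flow value $\Malpha$ by inserting an auxiliary node $u_\omega$ with a single arc $\usource \to u_\omega$ of lower and upper bound both equal to $\Malpha$, whereas you simply stipulate the flow value directly.
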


\begin{figure}[ht]
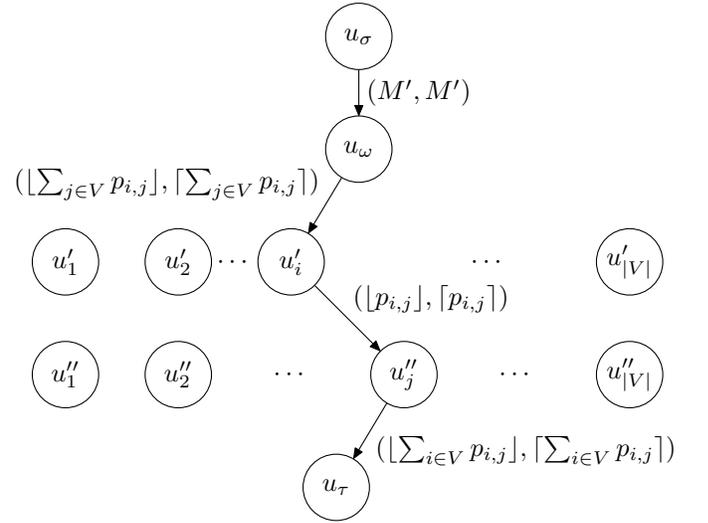

\centering
\myincludegraphics{flow}{4}
\caption{Flow network for the proof of Lemma~\ref{lemm:preflow}. An edge
labeled $(a,b)$ has lower and upper bounds $a$ and $b$, respectively.}
\label{fig:preflow}
\end{figure}

\begin{proof}
We recast~\Equation{eq:preflow_sumij}--\Equation{eq:preflow_sumi} as
an integer flow problem (see Figures~\ref{fig:preflow} and~\ref{fig:running3}).
Consider the following flow network, with upper and lower bounds on
the flow through the edges~\cite[\S 6.7]{Ahuja+:93}.
The network has the vertex set
\[
\set{\usource} \cup \set{u_\omega} \cup \set{\utarget} \cup
\set{u_i'}_{i \in V} \cup \set{u_j''}_{j \in V} \; ,
\]
with source $\usource$ and target $\utarget$.
Henceforth, when we refer to the upper (lower) bound of an edge,
we mean the upper (lower) bound on the flow through it.
There are four kinds of edges:
\begin{enumerate}
\item
An edge $\usource \rightarrow u_\omega$ with upper and lower
bounds both equaling to $M'$.
\item
$u_\omega \rightarrow u'_i$ for every $i \in V$,
with the upper and lower bounds
$\lfloor \sum_{j \in V} \phifp_{i,j} \rfloor$ and
$\lceil \sum_{j \in V}  \phifp_{i,j} \rceil$, respectively.
\item
$u'_i \rightarrow u''_j$ for every $i, j \in V$,
with the upper and lower bounds
$\lfloor \phifp_{i,j} \rfloor$ and
$\lceil  \phifp_{i,j} \rceil$, respectively.
\item
$u''_j \rightarrow \utarget$ for every $j \in V$,
with the upper and lower bounds
$\lfloor \sum_{i \in V} \phifp_{i,j} \rfloor$ and
$\lceil \sum_{i \in V}  \phifp_{i,j} \rceil$, respectively.
\end{enumerate}

We claim that~\Equation{eq:preflow_sumij}--\Equation{eq:preflow_sumi}
can be satisfied if a legal integer flow exists:
simply take $\dalpha_{i,j}$ as
the flow on the edge from $u'_i$ to $u''_j$.

It is well known that if a legal \emph{real} flow exists for
a flow network with integer upper and lower bounds on the edges,
then a legal \emph{integer} flow exists
as well~\cite[Theorem 6.5]{Ahuja+:93}.
Moreover, such a flow can be
efficiently found~\cite[\S 6.7]{Ahuja+:93}.
To finish the proof, we now exhibit such a legal real flow:
\begin{enumerate}
\item
The flow on the edge $\usource \rightarrow u_\omega$ is
$\sum_{i,j \in V} \phifp_{i,j} = M'$.
\item
The flow on an edge $u_\omega \rightarrow u'_i$ is
$\sum_{j \in V} \phifp_{i,j}$.
\item
The flow on an edge $u'_i \rightarrow u''_j$ is $\phifp_{i,j}$.
\item
The flow on an edge $u''_j \rightarrow \utarget$ is
$\sum_{i \in V} \phifp_{i,j}$.
\end{enumerate}
\end{proof}

\begin{figure}[ht]
\centering
\myincludegraphics{flow}{5}

\[
\Phifp = \left(
\begin{array}{ccc}
3.03 & 2.54 & 0 \\
1.65 & 0 & 0.89 \\
0.89 & 0 & 0
\end{array}
\right) \; , \qquad
\Dalpha  = \left(
\begin{array}{ccc}
4 & 2 & 0 \\
2 & 0 & 1 \\
0 & 0 & 0
\end{array}
\right) \; .
\]

\caption{\textbf{Running Example (3):} The flow network derived from $\Phifp$ in
Running Example 2. An edge labeled $a;\mathbf{b}$ has lower and upper
bounds $\floor{a}$ and $\ceil{a}$, respectively. A legal real flow is given by
$a$. A legal integer flow is given by $\mathbf{b}$. The matrix $\Dalpha$
resulting from the legal integer flow is given, as well as the matrix $\Phifp$ (again).}
\label{fig:running3}
\end{figure}

For the remaining part of this section, we assume that
$\Dalpha$ is a good quantization of $\Phifp$
(say, $\Dalpha$ is computed by solving the integer flow
problem in the last proof). The next lemma states that $\Dalpha$ ``almost'' satisfies (\ref{eq:sum_up}).

\begin{lemm}
\label{lemm:disc} Let $\bfralpha = (\ralpha_i) = \bfone \cdot
\Dalpha^T$ and $\bfrbeta = (\rbeta_i) = \bfone \cdot \Dalpha$.
Then, for all $i \in V$,
\[
\ralpha_i - \rbeta_i \in \set{-1,0,1} \; .
\]
\end{lemm}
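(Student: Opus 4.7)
The plan is to leverage the good-quantization bounds from Lemma~\ref{lemm:preflow} together with the crucial observation that for the real matrix $\Phifp$, the row-sum vector and the column-sum vector are \emph{both} equal to $\bfrfp$. Once that is established, each of $\ralpha_i$ and $\rbeta_i$ is squeezed between $\floor{\rho_i}$ and $\ceil{\rho_i}$, and the conclusion is immediate.

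First I would verify that $\bfone \cdot \Phifp^T = \bfone \cdot \Phifp = \bfrfp$. The row-sum direction is the easier one: $\sum_{j \in V} \phifp_{i,j} = \rho_i \sum_{j \in V} q_{i,j} = \rho_i$ because $Q$ is stochastic. The column-sum direction uses stationarity: $\sum_{i \in V} \phifp_{i,j} = \sum_{i \in V} \rho_i q_{i,j} = \Malpha \sum_{i \in V} \pi_i q_{i,j} = \Malpha \pi_j = \rho_j$, where we used $\bfpi Q = \bfpi$. This is the only step that uses anything specific about $\Phifp$; the rest is purely about rounding.

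Next I would invoke the good-quantization inequalities \Equation{eq:preflow_sumj} and \Equation{eq:preflow_sumi}. The former, applied to row $i$, gives $\floor{\rho_i} \leq \ralpha_i \leq \ceil{\rho_i}$, and the latter, applied to column $i$, gives $\floor{\rho_i} \leq \rbeta_i \leq \ceil{\rho_i}$ (here I am using the row/column sum identity from the previous step to rewrite $\sum_j \phifp_{i,j}$ and $\sum_j \phifp_{j,i}$ both as $\rho_i$). If $\rho_i \in \mathbb{Z}$ then $\floor{\rho_i} = \ceil{\rho_i} = \rho_i$, forcing $\ralpha_i = \rbeta_i$. Otherwise $\ceil{\rho_i} - \floor{\rho_i} = 1$, so both $\ralpha_i$ and $\rbeta_i$ lie in the two-element integer set $\{\floor{\rho_i}, \floor{\rho_i}+1\}$, whence $\ralpha_i - \rbeta_i \in \{-1,0,1\}$.

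There is no real obstacle here; the proof is essentially bookkeeping. The only point worth flagging is that the statement relies on $\Phifp$ being \emph{doubly balanced} with respect to $\bfrfp$ (row sums equal column sums at the real level), and this is exactly what stationarity of $\bfpi$ buys us. Without that, the two bounds on $\ralpha_i$ and $\rbeta_i$ would involve different real numbers and the integer gap could exceed one.
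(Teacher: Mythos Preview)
Your argument is correct and follows essentially the same route as the paper: bound both $\ralpha_i$ and $\rbeta_i$ by $\floor{\rho_i}$ and $\ceil{\rho_i}$ via \Equation{eq:preflow_sumj} and \Equation{eq:preflow_sumi}, using that the real row and column sums of $\Phifp$ coincide. The paper simply cites the earlier observation that \Equation{eq:sum_up} holds for $\Phifp$, whereas you re-derive it from stochasticity of $Q$ and stationarity of $\bfpi$; otherwise the proofs are the same.
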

\begin{proof}
From \Equation{eq:preflow_sumj}, we get that for all $i \in V$,
\begin{equation}
\label{eq:ralphi_bounded}
\textstyle \lfloor \sum_{j \in V} \phifp_{i,j} \rfloor
\leq \ralpha_i 
\leq \lceil \sum_{j \in V} \phifp_{i,j} \rceil \; .
\end{equation}
Recall that \Equation{eq:sum_up} is satisfied if we replace $\Da$
by $\Phifp$. Thus, by \Equation{eq:preflow_sumi}, we have that
\Equation{eq:ralphi_bounded} also holds if we replace $\ralpha_i$ by $\rbeta_i$.
We conclude that $\size{\ralpha_i - \rbeta_i} \leq 1$. The proof follows from the fact that entries of $\Dalpha$ are integers, and thus so are those of $\bfralpha$ and $\bfrbeta$.
\end{proof}


The following lemma will be the basis for augmenting $\Dalpha$ so that (\ref{eq:sum_up}) is satisfied.
\begin{lemm}
\label{lemm:Fmat}
Fix two distinct vertices $s,t \in V$. We can efficiently find a $\size{V} \times \size{V}$ matrix $F^{(s,t)}=F=(f_{i,j})_{i,j \in V}$ with non-negative integer entries, such that the following three conditions hold.
\begin{enumerate}
\renewcommand{\theenumi}{\roman{enumi}}
\renewcommand{\labelenumi}{(\theenumi)}
\item
\label{eq:fsumEntries}
\[
\bfone \cdot F \cdot \bfone^T \leq \diam{G} \; .
\]
\item For all $i,j \in V$,
\label{eq:fsane}
\[
f_{i,j} > 0 \;\; \mbox{only if} \;\; a_{i,j} > 0 \; .
\]
\item
Denote $\bfxi =  \bfone \cdot F^T$ and $\bfx = \bfone \cdot F$. Then, for all $i \in V$,
\[
\label{eq:fsurp}
x_i- \xi_i = 
\begin{cases}
-1 & \mbox{if $i = s$,} \\
1 & \mbox{if $i = t$,} \\
0 &  \mbox{otherwise.}
\end{cases}
\]
\end{enumerate}
\end{lemm}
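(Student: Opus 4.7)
The plan is to reduce the construction of $F=F^{(s,t)}$ to finding a shortest directed path from $s$ to $t$ in $G$. The key observation is that condition~(iii) says that the discrete ``divergence'' $x_i-\xi_i$ of the weighting $F$ equals $-1$ at $s$, $+1$ at $t$, and $0$ elsewhere; this is exactly the flow-conservation equation for one unit of integer flow routed from $s$ to $t$. The canonical way to produce such a flow is to push one unit along a single $s$-$t$ path, and to force $\bfone\cdot F\cdot\bfone^T\leq\diam{G}$ we simply choose a \emph{shortest} such path.

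Concretely, I would run a breadth-first search from $s$ in $G$ to obtain a shortest directed path $s=v_0\to v_1\to\cdots\to v_k=t$. Such a path exists because $G$ was assumed irreducible at the beginning of Section~\ref{sec:computingmultiplicitymatrix}, and its length satisfies $k\leq\diam{G}$ by definition of the diameter. I then define
\[
f_{i,j} = \size{\set{\ell\in\set{0,1,\ldots,k-1} : v_\ell=i \text{ and } v_{\ell+1}=j}} \; .
\]
These entries are non-negative integers by construction, and since every consecutive pair $(v_\ell,v_{\ell+1})$ is an edge of $G$, we have $f_{i,j}>0$ only if $a_{i,j}>0$, establishing~(ii).

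For~(i), $\bfone\cdot F\cdot\bfone^T=\sum_{i,j}f_{i,j}$ counts the total number of steps along the path, which is $k\leq\diam{G}$. For~(iii), note that $x_i-\xi_i=\sum_j f_{j,i}-\sum_j f_{i,j}$ is the number of times the path enters vertex $i$ minus the number of times it leaves $i$; each intermediate visit to a vertex contributes one entry and one exit that cancel, while the endpoints $v_0=s$ and $v_k=t$ contribute a net excess exit and a net excess entry, respectively (this remains valid even if $s$ or $t$ happens to recur in the path, although in a shortest path this does not occur). The whole procedure is dominated by one BFS and is thus efficient in the parameters of $G$.

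I do not expect a genuine obstacle here: the only conceptual step is recognizing~(iii) as the divergence condition of a unit $s$-$t$ flow, after which the shortest-path construction is standard. The only mildly delicate point is matching the correspondence between ``out-sum'' and $\xi_i=(\bfone\cdot F^T)_i=\sum_j f_{i,j}$ versus ``in-sum'' and $x_i=(\bfone\cdot F)_i=\sum_j f_{j,i}$, which must be tracked carefully so that the signs in~(iii) come out correctly.
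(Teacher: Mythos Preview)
Your proposal is correct and follows essentially the same approach as the paper: define $F$ as the edge-multiplicity matrix of a shortest $s$--$t$ path, then read off (i)--(iii) from the total length and the in/out degree balance along the path. The only cosmetic differences are that you spell out BFS and the irreducibility hypothesis to justify existence and efficiency, and you are a bit more explicit about matching $\xi_i$ and $x_i$ to out- and in-counts.
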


\begin{proof}
Let $k_1=s,k_2,k_3 \ldots, k_{\ell+1}=t$ be the vertices along a
shortest path from $s$ to $t$ in $G$. For all $i,j \in V$, define
\begin{equation}
\label{eq:f}
f_{i,j} = \size{\set{1 \leq h \leq \ell : k_h = i \;\; \mbox{and} \;\; k_{h+1} = j }} \; .
\end{equation}
Namely, $f_{i,j}$ is the number of edges from $i$ to $j$ along the path.

Conditions (\ref{eq:fsumEntries}) and (\ref{eq:fsane}) easily follow from (\ref{eq:f}). Condition (\ref{eq:fsurp}) follows from the fact that $\xi_i$ ($x_i$) is equal to the number of edges along the path for which $i$ is the start (end) vertex of the edge.
\end{proof}

The matrix $\Dalpha$ will be the basis for computing
a good multiplicity matrix $\Da$, as we demonstrate in
the proof of the next theorem.

\begin{theo}
\label{theo:DfromPtilde}
Let $\Dalpha = (\dalpha_{i,j})$ be a good quantization of $\Phifp$.
There exists a multiplicity matrix $\Da = (\da_{i,j})$
with respect to $G$ and $M$, such that
\begin{enumerate}
\item \label{it:daGreaterThanDalpha}
$\da_{i,j} \geq \dalpha_{i,j}$ for all $i, j \in V$, and---
\item
$\Malpha \leq \bfone \cdot \Da \cdot \bfone^T \leq M$
\end{enumerate}
(where $\Malpha$ is as defined in~\Equation{eq:Malpha}).
Moreover, the matrix $\Da$ can be found by an efficient algorithm.
\end{theo}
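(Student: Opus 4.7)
My plan is to build $\Da$ from $\Dalpha$ by adding a small correction that zeros out the row-sum/column-sum imbalance quantified in Lemma~\ref{lemm:disc}, while staying within the edge-support constraint~\Equation{eq:sane_entries} and the budget constraint~\Equation{eq:sum_entries}. The correction will be assembled out of the matrices produced by Lemma~\ref{lemm:Fmat}, one per deficient vertex.

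First, I would compute $\bfralpha = \bfone \cdot \Dalpha^T$ and $\bfrbeta = \bfone \cdot \Dalpha$, and form the sets
\[
S^{+} = \set{i \in V : \ralpha_i - \rbeta_i = 1} \; , \quad
S^{-} = \set{i \in V : \ralpha_i - \rbeta_i = -1} \; .
\]
By Lemma~\ref{lemm:disc} every vertex has $\ralpha_i - \rbeta_i \in \set{-1,0,1}$, and since $\bfone \cdot \Dalpha^T \cdot \bfone^T = \bfone \cdot \Dalpha \cdot \bfone^T$, the two sets have equal cardinality $K$. Clearly $K \leq \floor{\size{V}/2}$. Fix an arbitrary bijection $\sigma : S^{-} \to S^{+}$, and for every $s \in S^{-}$ invoke Lemma~\ref{lemm:Fmat} with the ordered pair $(s,\sigma(s))$ to obtain a nonnegative integer matrix $F^{(s,\sigma(s))}$. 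Define
\[
\Da \;=\; \Dalpha + \sum_{s \in S^{-}} F^{(s,\sigma(s))} \; .
\]

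Next I would verify that $\Da$ is a valid multiplicity matrix meeting the two conclusions. Property~\ref{it:daGreaterThanDalpha} holds because the $F$'s have nonnegative entries. The support condition~\Equation{eq:sane_entries} holds because $\Dalpha$ already satisfies it (a consequence of~\Equation{eq:preflow_entries} together with $\phifp_{i,j} = 0$ whenever $a_{i,j}=0$) and because part~(ii) of Lemma~\ref{lemm:Fmat} preserves it. The row-sum/column-sum equality~\Equation{eq:sum_up} is exactly what the construction was designed for: writing $\bfxi^{(s,t)} = \bfone \cdot (F^{(s,t)})^T$ and $\bfx^{(s,t)} = \bfone \cdot F^{(s,t)}$, part~(iii) of Lemma~\ref{lemm:Fmat} gives $\bfxi^{(s,t)} - \bfx^{(s,t)} = \bfe_s - \bfe_t$, so summing over the matched pairs we get $\bfone \cdot \Da^T - \bfone \cdot \Da = (\bfralpha - \bfrbeta) + \sum_{s \in S^{-}} (\bfe_s - \bfe_{\sigma(s)}) = \boldsymbol{0}$.

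The one nontrivial check is the budget~\Equation{eq:sum_entries}, i.e.\ that we did not spend more than $M$. By~\Equation{eq:preflow_sumij} we have $\bfone \cdot \Dalpha \cdot \bfone^T = \Malpha$, and by part~(i) of Lemma~\ref{lemm:Fmat} each $F^{(s,\sigma(s))}$ adds at most $\diam{G}$ to this sum, so
\[
\bfone \cdot \Da \cdot \bfone^T
\;\leq\; \Malpha + K \cdot \diam{G}
\;\leq\; M - \floor{\size{V}\diam{G}/2} + \floor{\size{V}/2} \cdot \diam{G} \; ,
\]
and a short case analysis on the parities of $\size{V}$ and $\diam{G}$ shows that $\floor{\size{V}/2} \cdot \diam{G} \leq \floor{\size{V}\diam{G}/2}$, so the right-hand side is at most $M$. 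The lower bound $\Malpha \leq \bfone \cdot \Da \cdot \bfone^T$ is immediate since we only add nonnegative quantities to $\Dalpha$. Efficiency follows because Lemma~\ref{lemm:preflow} constructs $\Dalpha$ efficiently, the imbalance sets are read off in linear time, the bijection $\sigma$ is arbitrary, and Lemma~\ref{lemm:Fmat} produces each $F^{(s,\sigma(s))}$ via a single shortest-path computation in $G$. The only place where care is needed is the budget verification, so that small parity argument is the one step I expect to be the main (if mild) obstacle.
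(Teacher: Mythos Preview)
Your construction is exactly the paper's: you pair each surplus vertex (your $S^{-}$) with a deficiency vertex (your $S^{+}$) via an arbitrary bijection, add the corresponding shortest-path matrices $F^{(s,t)}$ from Lemma~\ref{lemm:Fmat}, and verify \Equation{eq:sum_entries}--\Equation{eq:sane_entries} just as the paper does. The only place where you are more explicit than the paper is the parity case-check establishing $\floor{\size{V}/2}\cdot\diam{G}\le\floor{\size{V}\,\diam{G}/2}$, which the paper leaves implicit; otherwise the arguments coincide.
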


\begin{proof}
Consider a vertex $i \in V$. If $\rbeta_i > \ralpha_i$,
then we say that vertex $i$ has
a \emph{surplus} of $\rbeta_i - \ralpha_i$. In this case, by Lemma~\ref{lemm:disc}, we have that the surplus is equal to 1.
On the other hand, if $\rbeta_i < \ralpha_i$ then vertex $i$ has
a \emph{deficiency} of $\ralpha_i - \rbeta_i$, which again is equal to 1.

Of course, since
$\sum_{i \in V}  \ralpha_i = \sum_{i \in V} \rbeta_i = \Malpha$,
the total surplus is equal to the total deficiency,
and both are denoted by $\Surplus$:
\begin{equation}
\label{eq:surplus}
\Surplus = \sum_{i \in V}  \mmax{0, \rbeta_i {-}  \ralpha_i} =
- \sum_{i \in V} \mmin{0, \rbeta_i {-}  \ralpha_i} \; .
\end{equation}


Denote the vertices with surplus  as $(s_k)_{k=1}^\Surplus$ and the vertices with deficiency as $(t_k)_{k=1}^\Surplus$. Recalling the matrix $F$ from Lemma~\ref{lemm:Fmat}, we define
\[
\Da = \Dalpha + \sum_{k=1}^\Surplus F^{(s_k,t_k)} \; .
\]

We first show that $\Da$ is a valid multiplicity matrix. Note that $\Surplus \leq \size{V}/2$. Thus, (\ref{eq:sum_entries}) follows from (\ref{eq:Malpha}), (\ref{eq:preflow_sumij}), and (\ref{eq:fsumEntries}). The definitions of surplus and deficiency vertices along with (\ref{eq:fsurp}) give (\ref{eq:sum_up}). Lastly, recall that (\ref{eq:sane_entries}) is satisfied if we replace $\da_{i,j}$ by $\phifp_{i,j}$. Thus, by (\ref{eq:preflow_entries}), the same can be said for $\dalpha_{i,j}$. Combining this with (\ref{eq:fsane}) yields (\ref{eq:sane_entries}).

Since the entries of $F^{(s_k,t_k)}$ are non-negative for every $k$, we must have that $\da_{i,j} \geq \dalpha_{i,j}$ for all $i, j \in V$. This, together with (\ref{eq:sum_entries}) and (\ref{eq:preflow_sumij}), implies in turn that $\Malpha \leq \bfone \cdot \Da \cdot \bfone^T \leq M$.

\end{proof}

\begin{example}[Running Example (4):] For the matrix $\Dalpha$ in Figure~\ref{fig:running3}, we have
\[
\bfrbeta = \left(
\begin{array}{ccc}
6 & 2 & 1
\end{array}
\right) \; , \quad \bfralpha = \left(
\begin{array}{ccc}
6 & 3 & 0
\end{array}
\right) \; .
\]
Thus, $\Surplus = 1$. Namely, the vertex $\theta$ has a surplus while the vertex $\beta$ has a deficiency. Taking $s = \theta$ and $t = \beta$ we get
\[
F^{(s,t)} =
\left(
\begin{array}{ccc}
0 & 1 & 0 \\
0 & 0 & 0 \\
1 & 0 & 0
\end{array}
\right) \; , \;\; \mbox{and} \quad \Da =
\left(
\begin{array}{ccc}
4 & 3 & 0 \\
2 & 0 & 1 \\
1 & 0 & 0
\end{array}
\right) \; .
\]

\end{example}

Now that Theorem~\ref{theo:DfromPtilde} is proved, we are in a position to prove
our main result, Theorem~\ref{theo:main}. Essentially, the proof involves
using the Stirling approximation and taking into account the various
quantization errors introduced into $\Da$. The proof itself is given in the
Appendix.

\section{Enumerative coding into sequences with a given Markov type}
\label{sec:1D}
The main motivation for our methods is 2-D constrained coding. However, in this section, we show that they might be interesting in certain aspects of 1-D coding as well. Given a labeled graph $G$, a classic method for building an encoder for the 1-D constraint $S(G)$ is the state-splitting algorithm \cite{ACH:83}. The rate of an encoder built by \cite{ACH:83} approaches the capacity of $S(G)$. Also, the word the encoder outputs has a corresponding path in $G$, with the following favorable property: the probability of traversing a certain edge approaches the maxentropic probability of that edge (assuming an unbiased source distribution). However, what if we'd like to build an encoder with a different probability distribution on the edges? This scenario may occur, for example, when there is a requirement that all the output words of a given length $N$ that are generated by the encoder have a prescribed Hamming weight\footnote{We remark in passing that one may use convex programming techniques (see \cite[\S V]{MarcusRoth:92}) in order to efficiently solve the following optimization problem: find a probability distribution on the edges of $G$ yielding a stationary Markov chain with largest possible entropy, subject to a set of edges (such as the set of edges with label `1') having a prescribed cumulative probability.}.

More formally, suppose that we are given a labeled graph $G=(V,E,L)$; to make the exposition simpler, suppose that $G$ does not contain parallel edges. Let $Q$ and $\bfpi$ be a transition matrix and a stationary probability distribution corresponding to a stationary (but not necessarily maxentropic) Markov chain $\Markovchain$ on $G$. We assume w.l.o.g.\ that each edge in $G$ has a positive conditional probability. We are also given an integer $M$, which we will shortly elaborate on.

We first describe our encoder in broad terms, so as that its merits will be obvious. Let $D$ and $N$ be as previously defined, and let $R_T(D)$ be specified shortly.  We start at some fixed vertex $v_0 \in V$. Given $M \cdot R_T(D)$ information bits, we traverse a soon to be defined cyclic path of length $N$ in $G$. The concatenation of the edge labels along the path is the word we output. Of course, since the path is cyclic, the concatenation of such words is indeed in $S(G)$. Moreover, the path will have the following key property: the number of times an edge from $i$ to $j$ is traversed equals $d_{i,j}$. Namely, if we uniformly pick one of the $N$ edges of the path, the probability of picking a certain edge $e$ is constant (not a function of the input bits), and is equal to the probability of traversing $e$ on the Markov chain $\Markovchain$, up to a small quantization error. The rate $R_T$ of our encoder will satisfy (\ref{eq:Rbound}), where we replace $R$ by $R_T$ and $\capacity(S)$ by the entropy of $\Markovchain$. We would like to be able to exactly specify the path length $N$ as a design parameter. However, we specify $M$ and get an $N$ between $M$ and $M - \floor{ \size{V} \diam{G}/2 }$.

Our encoding process will make use of an \emph{oriented tree}, a term which we will now define.
A set of edges $T \subseteq E$ is an oriented tree of $G$ with root $v_0$ if $|T| = |V|-1$ and for each $u \in V$ there exists a path from $u$ to $v_0$ consisting entirely of edges in $T$ (see Figure~\ref{fig:orientedTree}). Note that if we reverse the edge directions of an oriented tree, we get a directed tree as defined in \cite[Theorem 2.5]{Even:79}. Since reversing the directions of all edges in an irreducible graph results in an irreducible graph, we have by \cite[Lemma 3.3]{Even:79} that an oriented tree $T$ indeed exists in $G$, and can be efficiently found. So, let us fix some oriented tree $T$ with root $v_0$. By \cite[Theorem 2.5]{Even:79}, we have that every vertex $u \in V$ which is not the root $v_0$ has an out-degree equal to 1. Thus, for each such vertex $u$ we may define $\parent(u)$ as the destination of the single edge in $T$ going out of $u$.

\begin{figure}[ht]
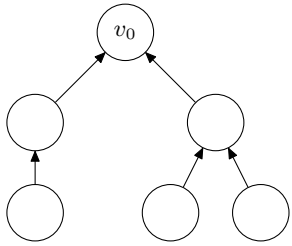

\centering
\myincludegraphics{flow}{9}
\caption{Oriented tree with root $v_0$.}
\label{fig:orientedTree}
\end{figure}

We now elaborate on the encoding process. The encoding consists of two steps. In the first step, we map the information bits to a collection of lists. In the second step, we use the lists in order to define a cyclic path.

First step: Given $M \cdot R_T(D)$ information bits, we build for each vertex $i \in V$ a list $\bflambda^{(i)}$ of length $r_i$,
\[
\bflambda^{(i)} = (\lambda^{(i)}_1,\lambda^{(i)}_2,\ldots,\lambda^{(i)}_{r_i}) \; .
\]
The entries of each $\bflambda^{(i)}$ are vertices in $V$. Moreover, the following properties are satisfied for all $i$: 
\begin{itemize}
\item The number of times $j$ is an entry in $\bflambda^{(i)}$ is exactly $d_{i,j}$.
\item If $i \neq v_0$, then the last entry of the list equals the parent of $i$. Namely,
\[
\lambda^{(i)}_{r_i} = \parent(i) \; .
\]
\end{itemize}

Recalling (\ref{eq:R_D}), a simple calculation shows that the number of possible list collections is
\begin{equation}
\label{eq:DeltaT}
\Delta_T = \Delta \cdot \prod_{i \in V \setminus \set{v_0}} \frac{d_{i,\parent(i)}}{r_i} \; .
\end{equation}
Thus, we define the rate of encoding as
\[
R_T = \frac{\lfloor \log_2 \Delta_T \rfloor}{M} \; .
\]
Also, note that as in the 2-D case, we may use enumerative coding in order to efficiently map information bits to lists.

Second step: We now use the lists $\bflambda^{(i)}$, $i \in V$,  in order to construct a cyclic path starting at vertex $v_0$. We start the path at $v_0$ and build a length-$N$ path according to the following rule: when exiting vertex $i$ for the $k$th time, traverse the edge going into vertex $\lambda^{(i)}_k$.

Of course, our encoding method is valid (and invertible) iff we may always abide by the above-mentioned rule. Namely, we don't get ``stuck'', and manage to complete a cyclic path of length $N$. This is indeed the case: define an auxiliary graph $G(D)$ with the same vertex set, $V$, as $G$ and $d_{i,j}$ parallel edges from $i$ to $j$ (for all $i,j \in V$). First, recall that for sufficiently large $M$, the presence of an edge from $i$ to $j$ in $G$ implies that $d_{i,j} > 0$. Thus, since $G$ was assumed to be irreducible, $G(D)$ is irreducible as well. Also, an edge in $T$ from $i$ to $j$ implies the existence of an edge in $G(D)$ from $i$ to $j$. Secondly, note that by (\ref{eq:sum_up}), the number of times we are supposed to exit a vertex is equal to the number of times we are supposed to enter it.  The rest of the proof follows from \cite[p. 56, Claim 2]{Stanley:99}, applied to the auxiliary graph $G(D)$. Namely, our encoder follows directly from van Aardenne-Ehrenfest and de Bruijn's \cite{vAEdB:51} theorem on counting Eulerian cycles in a graph. 

We now return to the rate, $R_T$, of our encoder. From (\ref{eq:Malpha}), (\ref{eq:preflow_entries}), (\ref{eq:preflow_sumi}) and Theorem \ref{theo:DfromPtilde}, we see that for $M$ sufficiently large, $\Delta_T$ is greater than some positive constant times $\Delta$. Thus, (\ref{eq:Rbound}) still holds if we replace $R$ by $R_T$ and $\capacity(S)$ by the entropy of $\Markovchain$.

\section{An example, and two improvement techniques}
\label{sec:twoOpt}
Recall from Section~\ref{sec:twoDimensionalConstraints} the square
constraint: its elements are all the binary arrays in which no two `1' symbols are adjacent on a row, column, or diagonal. By employing the methods presented in \cite{CalkinWilf:97}, we may calculate an upper bound on the rate of the constraint. This turns out to be $0.425078$. We will show
an encoding/decoding method with rate slightly larger
than $0.396$ (about $93\%$ of the upper bound). In order to do this, we assume
that the array has $100{,}000$ columns. Our encoding method has a fixed rate and
has a vertical window of size 2 and vertical anticipation 0.

We should point out now that a straightforward implementation of the
methods we have previously defined gives a rate which is strictly
$\emph{less}$ than $0.396$. Namely, this section also
outlines two improvement techniques which help boost the rate.

We start out as in the example given in
Section~\ref{sec:twoDimensionalConstraints}, except that the width of the data strips is now
$\wt=9$ (the width
of the merging strips remains $\wm=1$). The graph $G$ we choose produces all width-$\wt$ arrays satisfying the square constraint, and we take the merging strips to be
all-zero. Our array has $100{,}000$ columns, so we have $M = 10{,}000$ tracks
(the last, say, column of the array will essentially be unused; we can set all
of its values to 0).

Define the normalized capacity as
\[
\frac{\capacity(S(G))}{\wt + \wm} \; .
\]
The graph $G$ has $|V| = 89$ vertices and normalized capacity
\[
\frac{\capacity(S(G))}{\wt + \wm} \approx \frac{\capacity(S(G))}{\wt + \wm(1 - 1/M)} \approx  0.402 \; .
\]
This number is about $94.5\%$ from the upper bound on the capacity of our 2-D
constraint. Thus, as expected, there is an inherent loss in choosing to
model the 2-D constraint as an essentially 1-D constraint. Of course, this loss
can be made smaller by increasing $\wt$ (but the graph $G$ will grow as well).

From Theorem~\ref{theo:main}, the rate of our encoder will
approach the normalized capacity of $0.402$ as the number of tracks $M$
grows. So, once the graph $G$ is chosen, the parameter we should be comparing
ourselves to is the normalized capacity. We now apply the methods defined
in Section~\ref{sec:computingmultiplicitymatrix} and find a multiplicity matrix
$\Da$. Recall that the matrix $\Da$ defines an encoder. In our case, this
encoder has a rate of about $0.381$. This is $94\%$ of the normalized capacity, and is quite  disappointing (but the improvements shown in Sections~\ref{ssec:Moore} and \ref{ssec:breakmerge} below are going to improve this rate). On
the other hand, note that if we had limited ourselves to encode to each track
independently of the others, then the best rate we could have hoped for with 0
vertical anticipation turns out to be $0.3$ (see \cite[Theorem
5]{MarcusRoth:91}).

\subsection{Moore-style reduction}
\label{ssec:Moore} We now define a graph $\mathsf{G}$ which we call the reduction of $G$. Essentially, we will encode by constructing paths in $\mathsf{G}$, and then translate these to paths in $G$. In both $G$ and $\mathsf{G}$, the maxentropic distributions have the same entropy. The main virtue of $\mathsf{G}$ is that it often has less vertices and edges compared to $G$. Thus, the penalty in (\ref{eq:Rbound}) resulting from using a finite number of tracks will often be smaller.

For $s \geq 0$, we now recursively define the concept of $s$-equivalence (very much like in the Moore algorithm \cite[page 1660]{MarcusRothSiegel:98}).
\begin{itemize}
\item For $s=0$, any two vertices $v_1,v_2 \in V$ are 0-equivalent.
\item For $s > 0$, two vertices $v_1,v_2 \in V$ are $s$-equivalent iff 1) the two vertices $v_1,v_2$ are $(s-1)$-equivalent, and 2) for each $(s-1)$-equivalence class $\sfc$, the number of edges from $v_1$ to vertices in $\sfc$ is equal to the number of edges from $v_2$ to vertices in $\sfc$.
\end{itemize}
Denote by $\Pi_s$ the partition induced by $s$-equivalence. For the graph $G$ given in Figure~\ref{fig:example:G},
\begin{eqnarray*}
&&\Pi_0 = {\scriptstyle \set{0000,0001,0010,0100,0101,1000,1001,1010}} \; ,\\
&&\Pi_{s \geq 1} = \scriptstyle \set{0000},\set{0010,0100},\set{1000,0001},\set{1010,1001,0101} \; .
\end{eqnarray*}
Note that, by definition, $\Pi_{s+1}$ is a refinement of $\Pi_s$. Thus, let $s'$ be the smallest $s$ for which $\Pi_s$ = $\Pi_{s+1}$. The set $\Pi_{s'}$ can be efficiently found (essentially, by the Moore algorithm \cite[page 1660]{MarcusRothSiegel:98}).

Define a (non-labeled) graph $\mathsf{G}=(\mathsf{V},\mathsf{E})$ as follows. The vertex set of $\mathsf{G}$ is 
\[
\mathsf{V}=\Pi_{s'} \; .
\]
For each $\sfc \in \mathsf{V}$, let $v(\sfc)$ be a fixed element of $\sfc$ (if $\sfc$ contains more than one vertex, then pick one arbitrarily). Also, for each $v \in V$, let $\sfc(v)$ be the class $\sfc \in \mathsf{V}$ such that $v \in \sfc$. Let $\sigma_G(e)$ ($\sigma_\sfG(e)$) and $\tau_G(e)$ ($\tau_\sfG (e)$) denote the start and end vertex of an edge $e$ in $G$ ($\sfG$), respectively. The edge set $\mathsf{E}$ is defined as
\begin{equation}
\label{eq:Etag}
\mathsf{E} = \bigcup_{\sfc \in \mathsf{V}} \set{e \in E : \sigma_G(e) = v(\sfc)} \; ,
\end{equation}
where
\[
\sigma_{\sfG}(e) = \sfc(\sigma_G(e)) \quad \mbox{and} \quad \tau_{\sfG}(e) = \sfc(\tau_G(e)) \; .
\]
Namely, the number of edges from $\sfc_1$ to $\sfc_2$ in $\sfG$ is equal to the number of edges in $G$ from some fixed $v_1 \in \sfc_1$ to elements of $\sfc_2$, and, by the definition of $s'$, this number does not depend on the choice of $v_1$. The graph $\sfG$ is termed the \emph{reduction} of $G$. The reduction of $G$ from Figure~\ref{fig:example:G} is given in Figure~\ref{fig:example:Greduced}. Note that since $G$ was assumed to be irreducible, we must have that $\sfG$ is irreducible as well.

\begin{figure}[ht]
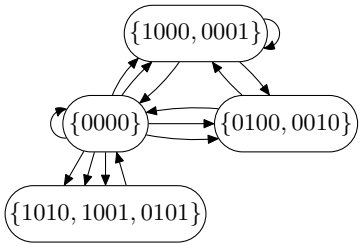

\centering
\myincludegraphics{flow}{7}
\caption{Reduction of the graph $G$ from Figure~\ref{fig:example:G}.}
\label{fig:example:Greduced}
\end{figure}

\begin{lemm}
The entropies of the maxentropic Markov chains on $G$ and $\sfG$ are equal.
\end{lemm}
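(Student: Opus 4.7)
The plan is to use the standard fact that the entropy of the maxentropic stationary Markov chain on an irreducible graph $X$ equals $\log_2 \lambda_X$, where $\lambda_X$ is the Perron (spectral radius) eigenvalue of the adjacency matrix of $X$ (see \cite[\S 3]{MarcusRothSiegel:98}). Since $G$ is assumed irreducible and $\sfG$ is noted to be irreducible just before the lemma, it suffices to prove $\lambda_G = \lambda_\sfG$.

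The combinatorial heart of the argument is the following \emph{lumping identity}, which I would extract from the fixed-point property $\Pi_{s'} = \Pi_{s'+1}$: for every $v \in V$ and every class $\sfc' \in \sfV$,
\[
\size{\set{e \in E : \sigma_G(e) = v,\; \tau_G(e) \in \sfc'}} \;=\; \size{\set{\sfe \in \sfE : \sigma_\sfG(\sfe) = \sfc(v),\; \tau_\sfG(\sfe) = \sfc'}} \; .
\]
The right-hand side equals the count on the left for the specific representative $v(\sfc(v))$ by the very definition of $\sfE$ in~(\ref{eq:Etag}); and by stability of the refinement ($\Pi_{s'}$ being $1$-equivalent to itself), this count is the same for every $v$ in the class $\sfc(v)$. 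This is precisely the classical sufficient condition for a Markov chain to be lumpable.

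Given the identity, let $\bfu = (u_\sfc)_{\sfc \in \sfV}$ be a strictly positive right Perron eigenvector of the adjacency matrix $A_\sfG$, with eigenvalue $\lambda_\sfG$. Define a pullback vector $\hat{\bfu} = (\hat{u}_v)_{v \in V}$ by $\hat{u}_v = u_{\sfc(v)}$. Then, grouping the sum over $v' \in V$ according to the class $\sfc(v') = \sfc'$ and applying the lumping identity, one computes
\[
(A_G \hat{\bfu})_v \;=\; \sum_{\sfc' \in \sfV} u_{\sfc'} \sum_{v' \in \sfc'} (A_G)_{v,v'} \;=\; \sum_{\sfc' \in \sfV} u_{\sfc'} (A_\sfG)_{\sfc(v),\sfc'} \;=\; \lambda_\sfG\, u_{\sfc(v)} \;=\; \lambda_\sfG\, \hat{u}_v \; .
\]
Thus $\hat{\bfu}$ is a strictly positive eigenvector of $A_G$ with eigenvalue $\lambda_\sfG$, and by the Perron--Frobenius theorem applied to the irreducible nonnegative matrix $A_G$, this forces $\lambda_\sfG = \lambda_G$.

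The only step I expect to require some care is verifying the lumping identity rigorously from the definition of $s'$-equivalence; once that is in hand, the eigenvector pullback and the appeal to Perron--Frobenius are immediate, and the equality of entropies follows.
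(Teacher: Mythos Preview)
Your proposal is correct and follows essentially the same approach as the paper: take the right Perron eigenvector of $A_\sfG$, pull it back to $V$ via $\hat u_v = u_{\sfc(v)}$, verify it is a positive eigenvector of $A_G$ with eigenvalue $\lambda_\sfG$, and invoke Perron--Frobenius on the irreducible $A_G$ to conclude $\lambda_G=\lambda_\sfG$. The paper compresses your ``lumping identity'' into the phrase ``it is easily verifiable,'' and leaves implicit the standard fact that the maxentropic entropy equals $\log_2\lambda$, but otherwise the argument is the same.
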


\begin{proof}
Let $\mathsf{A}=\mathsf{A}_{\sfG}$ be the adjacency matrix of $\sfG$, and recall that $A = A_G$ is the adjacency matrix of $G$. Let $\mathsf{\lambda}'$ and $\bfx' = (x'_\sfc)_{\sfc \in \mathsf{V}}$ be the Perron eigenvalue and right Perron eigenvector of $\mathsf{A}$, respectively \cite[\S 3.1]{MarcusRothSiegel:98}. Next, define the vector $\bfx=(x_v)_{v \in V}$ as
\[
x_v = x'_{\sfc(v)} \; .
\]
It is easily verifiable that $\bfx$ is a right eigenvector of $A$, with eigenvalue $\lambda'$. Now, since $\bfx'$ is a Perron eigenvector of an irreducible matrix, each entry of it is positive. Thus, each entry of $\bfx$ is positive as well. Since $A$ is irreducible, we must have that $\bfx$ is a Perron eigenvector of $A$. So, the Perron eigenvalue of $A$ is also $\lambda'$.
\end{proof}

The next lemma essentially states that we can think of paths in $\sfG$ as if they were paths in $G$.
\begin{lemm}
\label{lemm:pathConversion}
Let $\ell \geq 1$. Fix some $\sfc_0,\sfc_{\ell+1} \in \mathsf{V}$, and $v_0 \in \sfc_0$.  There exists a one-to-one correspondence between the following sets. First set: paths of length $\ell$ in $\sfG$ with start vertex $\sfc_0$ and end vertex $\sfc_{\ell+1}$. Second set: paths of length $\ell$ in $G$ with start vertex $v_0$ and end vertex in $\sfc_{\ell+1}$. 

Moreover, for $1 \leq t \leq \ell-1$, the first $t$ edges in a path belonging to the second set are a function of only the first $t$ edges in the respective path in the first set.
\end{lemm}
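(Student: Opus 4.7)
The plan is to build the correspondence one edge at a time, exploiting the fact that any two $s'$-equivalent vertices have, for each class $\sfc' \in \mathsf{V}$, the same number of outgoing $G$-edges terminating in $\sfc'$. This follows from the definition of $s'$-equivalence combined with the stabilization $\Pi_{s'+1} = \Pi_{s'}$ (so that counting against $\Pi_{s'}$-classes is consistent with counting against $\Pi_{s'-1}$-classes). Consequently, for every $v \in V$ and every $\sfc' \in \mathsf{V}$, I would fix once and for all an arbitrary bijection
\[
\phi_{v,\sfc'} : \set{e \in E : \sigma_G(e) = v,\; \tau_G(e) \in \sfc'} \to \set{e \in E : \sigma_G(e) = v(\sfc(v)),\; \tau_G(e) \in \sfc'} \; ,
\]
chosen to be the identity when $v = v(\sfc(v))$.

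The forward map is inductive. Let $\mathsf{e}_1,\ldots,\mathsf{e}_\ell$ be a $\sfG$-path from $\sfc_0$ to $\sfc_{\ell+1}$; by~\Equation{eq:Etag}, each $\mathsf{e}_t$ is literally a $G$-edge from $v(\sfc_{t-1})$ to some vertex in $\sfc_t$. Starting from the given $v_0 \in \sfc_0$, I construct a $G$-path $e_1,\ldots,e_\ell$ by
\[
e_t = \phi_{v_{t-1},\sfc_t}^{-1}(\mathsf{e}_t) \; , \qquad v_t = \tau_G(e_t) \in \sfc_t \; ,
\]
so that $v_\ell \in \sfc_{\ell+1}$ and the output lies in the second set. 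Since $e_t$ is a function of $v_{t-1}$ and $\mathsf{e}_t$, and $v_{t-1}$ depends only on $\mathsf{e}_1,\ldots,\mathsf{e}_{t-1}$, the causality statement is immediate.

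The inverse map reverses each step: given $e_1,\ldots,e_\ell$ starting at $v_0$ with $\tau_G(e_\ell) \in \sfc_{\ell+1}$, set $v_t = \tau_G(e_t)$, $\sfc_t = \sfc(v_t)$, and define $\mathsf{e}_t = \phi_{v_{t-1},\sfc_t}(e_t)$, which by~\Equation{eq:Etag} is an edge of $\sfG$ from $\sfc_{t-1}$ to $\sfc_t$. The two maps are mutually inverse because each $\phi_{v,\sfc'}$ is a bijection. The only step that genuinely requires checking is the well-definedness of $\phi_{v,\sfc'}$, i.e., the ``same outgoing profile'' property of $s'$-equivalent vertices, and this is the one place where subtlety is called for: the standard Moore-refinement argument uses $\Pi_{s'+1} = \Pi_{s'}$ (not merely the definition of $\Pi_{s'}$) to ensure that outgoing-edge counts match against $\Pi_{s'}$-classes themselves. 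Beyond this observation, the argument is pure bookkeeping.
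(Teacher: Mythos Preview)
Your proposal is correct and follows essentially the same inductive route as the paper. The paper's proof is a two-line sketch: for $\ell=1$ it asserts the equality of edge counts (invoking~\Equation{eq:Etag} after reducing w.l.o.g.\ to $v_0 = v(\sfc_0)$), and for $\ell>1$ it says ``combine the claim for $\ell-1$ with that for $\ell=1$.'' Your argument is the explicit unwinding of that sketch: you make the per-step bijection concrete via the maps $\phi_{v,\sfc'}$, and your emphasis that the ``same outgoing profile'' property genuinely requires the stabilization $\Pi_{s'+1}=\Pi_{s'}$ (not merely the definition of $\Pi_{s'}$) is a point the paper passes over silently with its ``w.l.o.g.''
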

\begin{proof}
We prove this by induction on $\ell$. For $\ell = 1$, we have
\begin{eqnarray*}
&& \size{\set{e \in \mathsf{E} : \sigma_{\sfG}(e) = \sfc_0 \; , \quad \tau_{\sfG}(e) = \sfc_1}} = \\
&& \size{\set{e \in E : \sigma_{G}(e) = v_0 \; , \quad \tau_{G}(e) \in \sfc_1}} \; .
\end{eqnarray*}
To see this, note that we can assume w.l.o.g.\ that $v_0 = v(\sfc_0)$, and then recall (\ref{eq:Etag}). For $\ell > 1$, combine the claim for $\ell - 1$ with that for $\ell=1$.
\end{proof}

Notice that $\diam{\sfG} \leq \diam{G}$. We now show why $\sfG$ is useful.
\begin{theo}
Let $\mathsf{D}$ be the multiplicity matrix found by the methods previously outlined, where we replace $G$ by $\sfG$. Let $\sfN = \bfone \cdot \mathsf{D} \cdot \bfone^T$. We may efficiently encode (and decode) information to $\GKK$ in a row-by-row manner at rate $R(\mathsf{D})$.
\end{theo}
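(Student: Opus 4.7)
The plan is to run the parallel encoder of Section~\ref{sec:encoder} on the reduced graph $\sfG$ with multiplicity matrix $\mathsf{D}$ and $\sfN$ tracks, and then lift each track from a path in $\sfG$ to a path in $G$ via Lemma~\ref{lemm:pathConversion}. Since that lift preserves both causality and path length, the composition will be an $\sfN$-track row-by-row encoder into $\GKK$ at the same rate $R(\mathsf{D})$.

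More concretely, first I would pick a typical vertex $\bfv^{(0)} = \bracket{\sfc^{(0)}_1, \sfc^{(0)}_2, \ldots, \sfc^{(0)}_\sfN}$ of $\sfGKK$ with respect to $\mathsf{D}$ and, for each track $k$, fix an arbitrary initial vertex $v_k^{(0)} \in \sfc^{(0)}_k$ of $G$. At each stage $t$, I feed $\sfN \cdot R(\mathsf{D})$ information bits to the Section~\ref{sec:encoder} encoder applied to $\sfG$; this produces an edge $\bfe^{(t)} \in \mathsf{E}^\sfN$, i.e., one $\sfG$-edge per track. I then apply the one-to-one correspondence of Lemma~\ref{lemm:pathConversion} independently in each track (with start vertex $v_k^{(0)}$) to extend the cumulative $\sfG$-path in track $k$ by one step, using the implicit edge-bijection between $\sfG$-edges out of the current class and $G$-edges out of the current $G$-vertex in that track. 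Concatenating across the $\sfN$ tracks yields row $t$ of the output word in $\GKK$.

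The key property to verify is that the lift respects the row-by-row structure and does not inflate the decoding window. This is exactly what the ``moreover'' clause of Lemma~\ref{lemm:pathConversion} delivers: the $G$-edge in track $k$ at time $t$ depends only on the $\sfG$-edges of stages $0,1,\ldots,t$, and in fact locally only on the current $G$-vertex together with the $\sfG$-edge at time $t$. No new quantization is introduced by the lift, so the rate is exactly $R(\mathsf{D})$, as promised.

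Decoding proceeds in reverse and row-by-row. Using the deterministic finite-memory labeling of $G$ I would first recover the $G$-edge of each track from a bounded window of $G$-labels, then invert the edge-bijection (using the already-known current $G$-vertex) to recover the corresponding $\sfG$-edge, and finally feed the reconstructed $\sfGKK$-edges into the $(\memory_\sfG,0)$-SBD decoder inherited from Section~\ref{sec:encoder} on $\sfG$ to recover the information bits. The main obstacle I anticipate is bookkeeping the composition of these three decoding steps carefully enough to extract a clean sliding-block decoder with a modest window (essentially the memory of $G$ plus the memory of $\sfG$); all of the underlying correctness, efficiency, and rate claims are immediate once Lemma~\ref{lemm:pathConversion} and Theorem~\ref{theo:DfromPtilde} (applied to $\sfG$ in place of $G$) are invoked, so the proof is really just verifying that the track-wise lift commutes with the parallel structure of the Section~\ref{sec:encoder} encoder.
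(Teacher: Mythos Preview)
Your proposal is correct and follows essentially the same route as the paper: run the Section~\ref{sec:encoder} encoder on $\sfG$ with multiplicity matrix $\mathsf{D}$, then lift each track's $\sfG$-path to a $G$-path via Lemma~\ref{lemm:pathConversion}, invoking the ``moreover'' clause for causality. The paper's proof is terser (it phrases the argument as a two-step scheme and fixes the canonical representative $v_0 = v(\sfc_0)$ rather than an arbitrary $v_k^{(0)} \in \sfc_k^{(0)}$, but this is immaterial), and it does not spell out decoding or the window bookkeeping you mention; your added detail there is sound but not needed for the statement as claimed.
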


\begin{proof}
We conceptually break our encoding scheme into two steps. In the first step, we ``encode'' (map) the information into $\sfN$ paths in $\sfG$, each path having length $\ell$. We do this as previously outlined (through typical vertices and edges in $\sfG$). Note that this step is done at a rate of $R(\mathsf{D})$. In the second step, we map each such path in $\sfG$ to a corresponding path in $G$. By Lemma~\ref{lemm:pathConversion}, we can indeed do this (take $\sfc_0$ as the first vertex in the path, $\sfc_{\ell+1}$ as the last vertex, and $v_0 = v(\sfc_0)$).

By Lemma~\ref{lemm:pathConversion} we see that this two-step encoding scheme can easily be modified into one that is row-by-row.
\end{proof}
Applying the reduction to our running example (square constraint with $\wt=9$ and $\wm=1$), reduces the number of vertices from $89$ in $G$ to $34$ in $\sfG$. The computed $\mathsf{D}$ increases the rate to about $0.392$, which is  $97.5\%$ of the normalized capacity.

\subsection{Break-merge}
\label{ssec:breakmerge}
Let $\sfGKK$ be the $\sfN$th Kronecker power of the Moore-style reduction $\sfG$. Recall that the rate of our encoder is 
\[
R(\mathsf{D}) = \frac{\floor{\log_2 \Delta}}{M} \; ,
\]
where $\Delta$ is the number of typical edges in $\sfGKK$ going out of a typical vertex. The second improvement involves expanding the definition of a typical edge, thus increasing $\Delta$. This is best explained through an example. Suppose that $\sfG$ has Figure~\ref{fig:breakMerge} as a subgraph; namely, we show all edges going out of vertices $\alpha$ and $\beta$. Also, let the numbers next to the edges be equal to the corresponding entries in $\mathsf{D}$. The main thing to notice at this point is that if the edges to $\epsilon$ and $\zeta$ are deleted (``break''), then $\alpha$ and $\beta$ have exactly the same number of edges from them to vertex $j$, for all $j\in V$ (after the deletion of edges, vertices $\alpha$ and $\beta$ can be ``merged'').
\begin{figure}[ht]
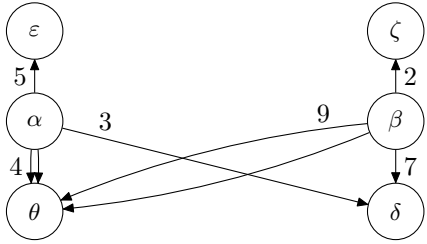

\centering
\myincludegraphics{flow}{8}
\caption{Break-merge example graph.}
\label{fig:breakMerge}
\end{figure}

Let $\bfv$ be a typical vertex. A short calculation shows that the number of entries in $\bfv$ that are equal to $\alpha$ ($\beta$) is $5+4+3 = 12$ ($9+7+2=18$). Recall that the standard encoding process consists of choosing a typical edge $\bfe$ going out of the typical vertex $\bfv$ and into another typical vertex $\bfv'$. We now briefly review this process. Consider the $12$ entries in $\bfv$ that are equal to $\alpha$. The encoding process with respect to them will be as follows (see Figure \ref{fig:breakMerge1}):
\begin{itemize}
\item Out of these $12$ entries, choose $5$ for which the corresponding entry in $\bfv'$ will be $\varepsilon$. Since there is exactly one edge from $\alpha$ the $\varepsilon$ in $\sfG$, the corresponding entries in $\bfe$ must be equal to that edge.
\item Next, from the remaining $7$ entries, choose $4$ for which
the corresponding entries in $\bfv'$ will be $\theta$. There are two parallel edges from $\alpha$ to $\theta$, so choose which one to use in the corresponding entries in $\bfe$.
\item We are left with $3$ entries, the corresponding entries in $\bfv'$ will be $\delta$. Also, we have one option as to the corresponding entries in $\bfe$.
\end{itemize}
A similar process is applied to the entries in $\bfv$ that are equal to $\beta$. Thus, the total number of options with respect to these entries is
\[
\frac{12! \cdot 2^4}{5! \cdot 4! \cdot 3!} \cdot \frac{18! \cdot 2^9}{2! \cdot 9! \cdot 7!} \approx 3.97 \cdot 10^{14} \; .
\]

\begin{figure}[ht]
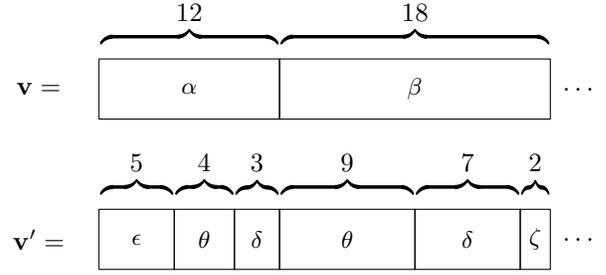

\centering
\myincludegraphics{flow}{10}
\caption{Illustration of the entries in two typical vertices $\bfv$, $\bfv'$, where we got from $\bfv$ to $\bfv'$ by the standard encoding process.}
\label{fig:breakMerge1}
\end{figure}

Next, consider a different encoding process (see Figure \ref{fig:breakMerge2}).
\begin{itemize}
\item Out of the $12$ entries in $\bfv$ that are equal to $\alpha$, choose $5$ for which the corresponding entry in $\bfv'$ will be $\varepsilon$. As before, the corresponding entries in $\bfe$ have only one option.
\item Out of the $18$ entries in $\bfv$ that are equal to $\beta$, choose $2$ for the corresponding entry in $\bfv'$ will be $\zeta$. Again, one option for entries in $\bfe$.
\item Now, of the remaining $23$ entries in $\bfv$ that are equal to $\alpha$ or $\beta$, choose $4+9=13$ for which the corresponding entry in $\bfv'$ will be $\theta$. We have two options for the entries in $\bfe$.
\item We are left with $3+7=10$ entries in $\bfv$ that are equal to $\alpha$ or $\beta$. These will have $\delta$ as the corresponding entry in $\bfv'$, and one option in $\bfe$.
\end{itemize}

\begin{figure}[ht]
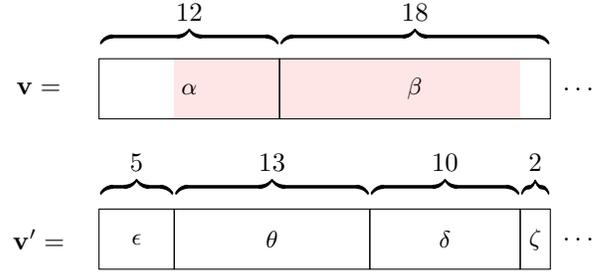

\centering
\myincludegraphics{flow}{11}
\caption{Illustration of the entries in two typical vertices $\bfv$, $\bfv'$, where we got from $\bfv$ to $\bfv'$ by the improved encoding process. The shaded part corresponds to vertices that were merged.}
\label{fig:breakMerge2}
\end{figure}

Thus, the total number of options is now
\[
\binom{12}{5} \cdot \binom{18}{2} \cdot \frac{23! \cdot 2^{13}}{13! \cdot 10!} \approx 1.14 \cdot 10^{15} \; .
\]
The important thing to notice is that in both cases, we arrive at a typical vertex $\bfv'$.

To recap, we first ``broke'' the entries in $\bfv$ that are equal to $\alpha$ into two groups: Those which will have $\varepsilon$ as the corresponding entry in $\bfv'$ and those which will have $\theta$ or $\delta$ as the corresponding entry. Similarly, we broke entries in $\bfv$ that are equal to $\beta$ into two groups. Next, we noticed that of these four groups, two could be ``merged'', since they were essentially the same. Namely, removing some edges from the corresponding vertices in $\sfG$ resulted in vertices which were mergeable.

Of course, these operations can be repeated. The hidden assumption is that the sequence of breaking and merging is fixed, and known to both the encoder and decoder. The optimal sequence of breaking and merging is not known to us. We used a heuristic. Namely, choose two vertices such that the sets of edges emanating from both have a large overlap. Then, break and merge accordingly. This was done until no breaking or merging was possible. We got a rate of  about $0.396$, which is  $98.5\%$ of the normalized capacity.

\section{Fast enumerative coding}
\label{sec:fastenumerativecoding}

Recall from Section~\ref{sec:encoder} that in
the course of our encoding algorithm, we make use of a procedure which encodes information
into fixed-length binary words of constant weight.
A way to do this would be to use enumerative coding~\cite{Cover:73}.
Immink~\cite{Immink:97} showed a method to significantly improve
the running time of an instance of enumerative coding, with a typically
negligible penalty in terms of rate.
We now briefly show how to tailor Immink's method to our needs.

Denote by $n$ and $\delta$ the length and Hamming weight, respectively,
of the binary word we encode into. Some of our variables will be \emph{floating-point} numbers with a mantissa of
$\mu$ bits and an exponent of $\epsilon$ bits:
each floating-point number is of the form
$\float{x} = a \cdot 2^b$ where $a$ and $b$ are integers such that
\[
2^{\mu} \leq a < 2^{\mu+1} \quad \mbox{and} \quad -2^{\epsilon-1} \leq b < 2^{\epsilon-1} \; .
\]
Note that $\mu+\epsilon$ bits are needed to store such a number. Also, note that every positive real $x$ such that
\[
2^\mu \cdot 2^{-2^{\epsilon-1}} \leq x \leq (2^{\mu+1} - 1) \cdot 2^{2^{\epsilon-1}-1}
\]
has a floating point approximation $\float{x}$ with relative precision
\begin{equation}
\label{eq:fp}
\left( 1 - \frac{1}{2^{\mu}} \right) \leq \frac{\float{x}}{x} \leq \left( 1 + \frac{1}{2^{\mu}}  \right) \; .
\end{equation}
We assume the presence of two look-up tables. The first will contain
the floating-point approximations of $1!,2!,\ldots,n!$.
The second will contain the floating-point approximations of
$f(0), f(1), \ldots, f(\delta)$, where
\[
f(\chi) = f_\mu(\chi) = 1 - \frac{32 \chi+ 16}{2^{\mu}} \; .
\]
In order to exclude uninteresting cases, assume that $\mu \geq 10$ and is such that $f(\delta) \geq 1/2$. Also, take $\epsilon$ large enough so that $n!$ is less than the maximum number we can represent by floating point. Thus, we can assume that $\mu = O(\log \delta)$ and $\epsilon = O(\log n)$.

Notice that in our case,
we can bound both $n$ and $\delta$ from above by
the number of tracks $M$.
Thus, we will actually build beforehand two look-up tables of size
$2M(\mu+\epsilon)$ bits.

Let $\float{x}$ denote the floating-point approximation of $x$,
and let $\fmult$ and $\div$ denote floating-point multiplication and
division, respectively. For $0 \leq \chi \leq \kappa \leq n$ we define
\begin{equation*}
\ifbinom{\kappa}{\chi} =
\ceiling{(\float{\kappa!}\fmult
\float{f(\chi)})\div (\float{\chi!} \fmult \float{(\kappa-\chi)!})} \; .
\end{equation*}
Note that since we have stored the relevant numbers in our
look-up table, the time needed to calculate the above function is only
$O(\mu^2+\epsilon)$. The encoding procedure is given in
Figure~\ref{fig:enumerativeCoder}. We note the following points: 
\begin{itemize}
\item The variables $n$, $\psi$, $\delta$ and $\iota$ are integers (as opposed to floating-point numbers). \item In the subtraction of $\ifbinom{n-\iota}{\delta-1}$ from $\psi$ in line 5, the floating-point number $\ifbinom{n-\iota}{\delta-1}$ is ``promoted'' to an integer (the result is an integer).
\end{itemize}

\begin{figure}[hbt]
\centering
\begin{algorithm}
\textbf{Name:} $\textrm{EnumEncode}(n,\delta,\psi)$ \\
\vspace{-1ex}

\textbf{Input:}
Integers $n,\delta,\psi$ such that
$0 \leq \delta \leq n$ and $0 \leq \psi < \ifbinom{n}{\delta}$.

\textbf{Output:}
A binary word of length $n$ and weight $\delta$.
\vspace{2ex}

if ($\delta$ == $0$) // stopping condition:  \hfill /* 1 */ \\
\hspace*{1em} return $\underbrace{00 \ldots 0}_n$; \hfill /* 2 */ \\
for ($\iota \leftarrow 1$; $\iota \leq n-\delta+1$; $\iota$++) \{ \hfill /* 3 */ \\
\hspace*{1em} if ($\psi \geq \ifbinom{n-\iota}{\delta-1}$) \hfill /* 4 */ \\
\hspace*{2em} $\psi \leftarrow \psi - \ifbinom{n-\iota}{\delta-1}$; \hfill /* 5 */ \\
\hspace*{1em} else \hfill /* 6 */ \\
\hspace*{2em} return
$\underbrace{00..0}_{\iota-1} 1 \|
\textrm{EnumEncode}(n-\iota, \delta-1, \psi)$; \hfill /* 7 */ \\
\} \hfill /* 8 */ \\
\end{algorithm}
\caption{Enumerative encoding procedure for constant-weight
binary words.}
\label{fig:enumerativeCoder}
\end{figure}

We must now show that the procedure is valid, namely, that given a valid input, we produce a valid output. For our procedure, this reduce to showing two things: 1) If the stopping condition is not met, a recursive call will be made. 2) The recursive call is given valid parameters as well. Namely, in the recursive call, $\psi$ is non-negative. Also, for the encoding to be invertible, we must further require that 3) $\ifbinom{n}{0}=1$ for $n \geq 0$.

Condition 2 is clearly met, because of the check in line 4. Denote
\[
\fbinom{\kappa}{\chi} = (\float{\kappa!}\fmult
\float{f(\chi)})\div (\float{\chi!} \fmult \float{(\kappa-\chi)!}) 
\]
(and so, $\ifbinom{\kappa}{\chi} = \lceil\fbinom{\kappa}{\chi}\rceil$). Condition 3 follows from the next lemma.

\begin{lemm}
\label{lemm:boundRoundoff}
Fix $0 \leq \delta \leq n$. Then,
\[
\binom{n}{\delta} \cdot \left( 1 - \frac{32 (\delta+1)}{2^\mu} \right) \leq \fbinom{n}{\delta} \leq \binom{n}{\delta} \cdot \left( 1 - \frac{32 \delta}{2^\mu} \right) \; .
\]
\end{lemm}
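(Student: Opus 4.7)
The plan is to track how the exact value $\binom{n}{\delta}\,f(\delta)$ is perturbed in the course of evaluating $\fbinom{n}{\delta}$. Seven relative errors enter, each bounded by $\alpha := 2^{-\mu}$ via~(\ref{eq:fp}): one for each of the four stored floating-point approximations $\float{n!}$, $\float{f(\delta)}$, $\float{\delta!}$, $\float{(n-\delta)!}$, and one for each of the three arithmetic operations (the $\fmult$ in the numerator, the $\fmult$ in the denominator, and the $\fdiv$). Writing out the definition of $\fbinom{n}{\delta}$ and collecting factors, I would obtain
\[
\fbinom{n}{\delta} \;=\; \binom{n}{\delta}\,f(\delta)\cdot E \; ,
\]
where $E$ is a product of four factors in $[1-\alpha,1+\alpha]$ (coming from $\float{n!}$, $\float{f(\delta)}$, the numerator $\fmult$, and the $\fdiv$) divided by three such factors (coming from $\float{\delta!}$, $\float{(n-\delta)!}$, and the denominator $\fmult$).

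Next I would bound $E$ in the worst case as $(1-\alpha)^4/(1+\alpha)^3 \leq E \leq (1+\alpha)^4/(1-\alpha)^3$. Since $\mu \geq 10$ gives $\alpha \leq 2^{-10}$, two elementary applications of Bernoulli's inequality yield the generous bounds
\[
1 - 7\alpha \;\leq\; E \;\leq\; 1 + 16\alpha \; .
\]
I would not try to sharpen the constants here, since the $16/2^\mu$ summand inside $f(\delta)$ was evidently engineered precisely to absorb any bounded multiple of $\alpha$ of this form.

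Substituting $f(\delta)=1-(32\delta+16)\alpha$ and multiplying out then finishes both sides. For the upper bound,
\[
f(\delta)(1+16\alpha) \;=\; 1 - 32\delta\alpha - 16(32\delta+16)\alpha^2 \;\leq\; 1 - 32\delta\alpha \; .
\]
For the lower bound,
\[
f(\delta)(1-7\alpha) \;\geq\; 1 - (32\delta+23)\alpha \;\geq\; 1 - 32(\delta+1)\alpha \; ,
\]
using $23 \leq 32$. The main obstacle is not any deep idea but the careful bookkeeping of where each of the seven relative-error factors lands (numerator versus denominator of $E$) and confirming that $\mu \geq 10$ really does suffice to apply Bernoulli cleanly; once that is pinned down, the built-in slack in $f(\delta)$ does all of the remaining work.
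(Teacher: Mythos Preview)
Your proposal is correct and is precisely the approach the paper intends: the paper's own proof consists of the single sentence ``The proof is essentially repeated invocations of~(\ref{eq:fp}) on the various stages of computation. We leave the details to the reader.'' You have supplied exactly those details---tracking the seven relative errors (four stored values, three operations) and exploiting the deliberate slack in $f(\delta)$ to absorb the resulting $O(2^{-\mu})$ perturbation.
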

\begin{proof}
The proof is essentially repeated invocations of (\ref{eq:fp}) on the various stages of computation. We leave the details to the reader.
\end{proof}

Finally, Condition 1 follows easily from the next lemma.
\begin{lemm}
Fix $0 \leq \delta \leq n$. Then,
\[
\ifbinom{n}{\delta} \leq \sum_{\iota = 1}^{n-\delta+1} \ifbinom{n-\iota}{\delta-1} \; .
\]
\end{lemm}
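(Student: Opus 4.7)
The plan is to lift the classical combinatorial identity
\[
\binom{n}{\delta} \;=\; \sum_{\iota=1}^{n-\delta+1} \binom{n-\iota}{\delta-1}
\]
(obtained by partitioning length-$n$ weight-$\delta$ binary words according to the position $\iota$ of the first 1) from exact binomial coefficients to their floating-point approximations $\fbinom{\cdot}{\cdot}$, and then take ceilings. The whole trick hinges on the fact that the perturbation factor $f(\chi) = 1 - (32\chi+16)/2^\mu$ used in $\fbinom{\kappa}{\chi}$ was chosen so that Lemma~\ref{lemm:boundRoundoff} delivers an upper bound on $\fbinom{n}{\delta}$ and a lower bound on each $\fbinom{n-\iota}{\delta-1}$ that carry \emph{matching} multiplicative factors.

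More concretely, I would first invoke Lemma~\ref{lemm:boundRoundoff} with parameter $\delta$ to obtain
\[
\fbinom{n}{\delta} \;\leq\; \binom{n}{\delta}\left(1-\frac{32\delta}{2^\mu}\right),
\]
and then invoke it again with parameter $\delta-1$ to obtain, for each $\iota$,
\[
\fbinom{n-\iota}{\delta-1} \;\geq\; \binom{n-\iota}{\delta-1}\left(1-\frac{32\bigl((\delta-1)+1\bigr)}{2^\mu}\right) \;=\; \binom{n-\iota}{\delta-1}\left(1-\frac{32\delta}{2^\mu}\right).
\]
Summing the lower bound over $\iota$ and applying the classical identity then gives
\[
\sum_{\iota=1}^{n-\delta+1}\fbinom{n-\iota}{\delta-1} \;\geq\; \binom{n}{\delta}\left(1-\frac{32\delta}{2^\mu}\right) \;\geq\; \fbinom{n}{\delta}.
\]

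To finish, I would pass to ceilings. Monotonicity of $\lceil\cdot\rceil$ gives $\ifbinom{n}{\delta} = \lceil\fbinom{n}{\delta}\rceil \leq \lceil\sum_\iota \fbinom{n-\iota}{\delta-1}\rceil$. Subadditivity of the ceiling function, namely $\lceil x+y\rceil \leq \lceil x\rceil+\lceil y\rceil$ (because $\lceil x\rceil+\lceil y\rceil$ is an integer at least $x+y$, hence at least $\lceil x+y\rceil$), extends inductively to the finite sum and yields
\[
\ifbinom{n}{\delta} \;\leq\; \sum_{\iota=1}^{n-\delta+1}\lceil\fbinom{n-\iota}{\delta-1}\rceil \;=\; \sum_{\iota=1}^{n-\delta+1}\ifbinom{n-\iota}{\delta-1},
\]
as required.

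I do not anticipate a genuine obstacle here; the only subtle point is the bookkeeping that the lower bound in Lemma~\ref{lemm:boundRoundoff} applied at weight $\delta-1$ produces the factor $1-32\delta/2^\mu$, which is exactly the upper-bound factor at weight $\delta$. This alignment is precisely the reason the constants $32\chi$ and $16$ were inserted into $f(\chi)$ in the first place, so the proof comes down to a one-line chain of inequalities plus the monotone/subadditive behavior of $\lceil\cdot\rceil$.
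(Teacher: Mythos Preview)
Your proof is correct and follows essentially the same approach as the paper: reduce to the inequality $\fbinom{n}{\delta} \leq \sum_\iota \fbinom{n-\iota}{\delta-1}$ via subadditivity of the ceiling, then derive that inequality from Lemma~\ref{lemm:boundRoundoff} combined with the classical binomial identity $\binom{n}{\delta}=\sum_\iota\binom{n-\iota}{\delta-1}$. You simply spell out in full the matching of the $(1-32\delta/2^\mu)$ factors that the paper leaves implicit in the word ``immediate.''
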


\begin{proof}
The claim will follow if we show that
\[
\fbinom{n}{\delta} \leq \sum_{\iota = 1}^{n-\delta+1} \fbinom{n-\iota}{\delta-1} \; .
\]
This is immediate from Lemma~\ref{lemm:boundRoundoff} and the binomial identity
\[
\binom{n}{\delta} = \sum_{\iota = 1}^{n-\delta+1} \binom{n-\iota}{\delta-1} \; .
\]
\end{proof}

Note that the penalty in terms of rate one suffers because of using our procedure (instead of plain enumerative coding) is negligible. Namely, $\log_2 \ifbinom{n}{\delta}$ can be made arbitrarily close to $\log_2 \binom{n}{\delta}$. Since we take $\epsilon = O(\log n)$ and $\mu = O(\log \delta)$, we can show by amortized analysis that the running time of the procedure is $O(n \log^2 n)$. Specifically, see \cite[Section 17.3]{CLRS:01}, and take the potential of the binary vector corresponding to $\psi$ as the number of entries in it that are equal to `$0$'. The decoding procedure is a straightforward ``reversal'' of the encoding procedure, and its running time is also $O(n \log^2 n)$.

\section{Appendix}
\label{sec:appendix}
\noindent\hspace{2em}{\itshape Proof of
Theorem~\ref{theo:main}:}
Let $\tilde{\Delta}$ be as in~\Equation{eq:R_D}, where we replace
$\da_{i,j}$ by $\dalpha_{i,j}$ and $\ra_i$ by $\ralpha_i$. By the
combinatorial interpretation of~\Equation{eq:R_D}, and the fact that
$\da_{i,j} \geq \dalpha_{i,j}$ for all $i, j \in V$, it easily follows that
$\Delta \geq \tilde{\Delta}$. Thus,
\[
R(\Da) \geq \frac{\lfloor \log_2 \tilde{\Delta} \rfloor}{M} =
\frac{\Malpha}{M} \cdot \frac{\lfloor \log_2 \tilde{\Delta}
\rfloor}{\Malpha} \; .
\]

Denote by $\sfe$ the base of natural logarithms. By Stirling's formula we have
\[
\log_2(t!) = t \log_2(t/\sfe) + O(\log t) \; ,
\]
and from \Equation{eq:R_D} we get that
\begin{multline*}
\log_2 \tilde{\Delta} = \sum_{i \in V} \ralpha_i \log_2(\ralpha_i/\sfe) - \sum_{i,j \in V}
\dalpha_{i,j} \log_2(\dalpha_{i,j}/\sfe)
\\
+ \sum_{i,j \in V} \dalpha_{i,j} \log_2(a_{i,j}) -
O(|V|^2 \log M) \; .
\end{multline*}
By (\ref{eq:preflow_sumij}) and \Equation{eq:preflow_entries},
\begin{multline*}
\sum_{i,j \in V} \dalpha_{i,j} \log_2(a_{i,j}) = 
\\
\sum_{i,j \in V} \phifp_{i,j} \log_2(a_{i,j}) - O\left(|V|^2
\log_2(\amax/\amin)\right) \; .
\end{multline*}
Since $\sum_{j} \dalpha_{i,j} = \ralpha_i$, we have
\begin{multline*}
\sum_{i \in V} \ralpha_i \log_2(\ralpha_i/\sfe) - \sum_{i,j \in V}
\dalpha_{i,j} \log_2(\dalpha_{i,j}/\sfe)
\\
= \sum_{i \in V} \ralpha_i \log_2(\ralpha_i) - \sum_{i,j \in V}
\dalpha_{i,j} \log_2(\dalpha_{i,j}) \; .
\end{multline*}
Moreover, by \Equation{eq:preflow_sumj} and \Equation{eq:preflow_entries}, the
RHS of the last equation equals
\[
\sum_{i \in V} \rfp_i \log_2(\rfp_i) - \sum_{i,j \in V}
\phifp_{i,j} \log_2(\phifp_{i,j}) - O(|V|^2) \; .
\]

We conclude that
\begin{multline*}
\log_2 \tilde{\Delta} = \sum_{i \in V} \rfp_i \log_2(\rfp_i) - \sum_{i,j \in
V}
\phifp_{i,j} \log_2(\phifp_{i,j})
\\
+ \sum_{i,j \in V} \phifp_{i,j} \log_2(a_{i,j}) - O\left(|V|^2 (\log
M \cdot \amax/\amin)\right) \; .
\end{multline*}

Lastly, recall that $\rfp_i = M' \pi_i$ and $\phifp_{i,j} = \rfp_i q_{i,j}$. Thus,
\[
\log_2 \tilde{\Delta} = M' H(\Markovchain) - O\left(|V|^2 (\log
M \cdot \amax/\amin)\right) \; ,
\]
where $H(\Markovchain)$ is the entropy of the stationary Markov chain $\Markovchain$  with transition matrix $Q$. Recall that $\Markovchain$ was selected to be maxentropic: $H(\Markovchain) = \capacity(S(G))$. This fact, along with \Equation{eq:Malpha} and a short calculation, finishes the proof.
\endproof

\section{Acknowledgments}
The first author would like to thank Roee Engelberg for very helpful
discussions.


\begin{thebibliography}{1}
\bibitem{vAEdB:51}
T. van Aardenne-Ehrenfest and N.~G. de Bruijn,
``Circuits and trees in oriented linear graphs,''
\emph{Simon Stevin}, 28 (1951), 203--217.

\bibitem{ACH:83}
R.~L. Adler, D. Coppersmith, and M. Hassner
``Algorithms for sliding block codes---an application of symbolic dynamics to information theory,''
\emph{IEEE Trans. Inform. Theory}, 29 (1983), 5--22.

\bibitem{Aho+:74}
A.~V. Aho, J.~E. Hopcroft, and J.~D. Ullman,
\emph{The Design and Analysis of
Computer Algorithms}.\hskip 1em plus 0.5em minus 0.4em\relax
Reading, Massachusetts: Addison-Wesley, 1974.

\bibitem{Ahuja+:93}
R.~K. Ahuja, T.~L. Magnanti, and J.~B. Orlin,
\emph{Network Flows}.\hskip 1em plus 0.5em minus 0.4em\relax
Engelwood Cliffs, New Jersey: Prentice Hall, 1993.


\bibitem{BurtonSteif:94}
R. Burton and J.~E. Steif,
``Non-uniqueness of measures of maximal entropy for subshifts of finite type,''
\emph{Ergod. Th. Dynam. Sys.},
14 (1994), 213--235.

\bibitem{CalkinWilf:97}
N. Calkin and H.~S. Wilf,
``The number of independent sets in a grid graph,''
\emph{SIAM J. Discrete Math.}, 11 (1997), 54--60.

\bibitem{Campello+:02}
J.~Campello de Souza, B.~Marcus, R.~New, and B.~Wilson,
``Constrained Systems with Unconstrained Positions,''
\emph{IEEE Trans. Inform. Theory}, 48 (2002), 866--879.

\bibitem{CLRS:01}
T.~H. Cormen, C.~E. Leiserson, R.~L. Rivest, and C. Stein,
\emph{Introduction to Algorithms}, 2nd ed.\hskip 1em plus 0.5em minus 0.4em\relax
Cambridge, Massachusetts: The MIT Press, 2001.

\bibitem{Cover:73}
T.~Cover,
``Enumerative source coding,''
\emph{IEEE Trans. Inform. Theory}, 19 (1973), 73--77.

\bibitem{Etzion:97}
T.~Etzion,
``Cascading methods for runlength-limited arrays,''
\emph{IEEE Trans. Inform. Theory}, 43 (1997), 319--324.

\bibitem{Even:79}
S. Even,
\emph{Graph Algorithms}.\hskip 1em plus 0.5em minus 0.4em\relax
Computer Science Press, 1979.

\bibitem{HalevyRoth:02}
S.~Halevy and R.~M. Roth,
``Parallel constrained coding with application to
two-dimensional constraints,''
\emph{IEEE Trans. Inform. Theory}, 48 (2002), 1009--1020.

\bibitem{Immink:97}
K.~A.~S. Immink,
``A practical method for approaching the channel capacity of
constrained channels,''
\emph{IEEE Trans. Inform. Theory}, 43 (1997), 1389--1399.

\bibitem{KatoZeger:99}
A.~Kato and K.~Zeger,
``On the capacity of two-dimensional run-length constrained code,''
\emph{IEEE Trans. Inform. Theory}, 45 (1999), 1527--1540.

\bibitem{MarcusRothSiegel:98}
B.~H. Marcus, R.~M. Roth, and P.~H. Siegel,
``Constrained systems and coding for recording channels,'' in
\emph{Handbook of Coding Theory},
V.~Pless and W.~Huffman, Eds.\hskip 1em plus 0.5em minus 0.4em\relax
Amsterdam: Elsevier, 1998, pp.~1635--1764.

\bibitem{WeeksBlahut:98}
W.~Weeks and R.~E. Blahut,
``The capacity and coding gain of certain checkerboard codes,''
\emph{IEEE Trans. Inform. Theory}, 44 (1998), 1193--1203.

\bibitem{MarcusRoth:91}
B.~H. Marcus and R.~M. Roth,
``Bounds on the number of states in encoder graphs for
input-constrained channels,''
\emph{IEEE Trans. Inform. Theory}, 37 (1991), 742--758.

\bibitem{MarcusRoth:92}
B.~H. Marcus and R.~M. Roth,
``Improved {G}ilbert--{V}arshamov bound for constrained systems,''
\emph{IEEE Trans. Inform. Theory}, 38 (1992), 1213--1221.

\bibitem{Stanley:99}
R.~P. Stanley,
\emph{Enumerative Combinatorics, Volume 2}.\hskip 1em plus 0.5em minus 0.4em\relax
Cambridge, UK: Cambridge University Press, 1999.

\end{thebibliography}
\end{document}